\theoremstyle{plain}
\newtheorem{thm}{Theorem}
\newtheorem{lem}[thm]{Lemma}
\newtheorem{cor}[thm]{Corollary}
\newtheorem{prop}[thm]{Proposition}
\newtheorem{remark}[thm]{Remark}
\newtheorem{defn}[thm]{Definition}
\newtheorem{ex}[thm]{Example}
\numberwithin{thm}{section}
\numberwithin{equation}{section}
\newcommand{\defas}{\mathrel{\mathop:}=} % :=
\newcommand{\set}[1]{\{#1\}}
\newcommand{\norm}[1]{\|#1\|}
\newcommand{\abs}[1]{|#1|}
\newcommand{\Ext}{\mathrm{ext}}
\newcommand{\Int}{\mathrm{int}}
\newcommand{\zz}{{\overline{z}}}
\newcommand{\Li}{\mathrm{Li}}
\DeclareMathOperator{\RE}{Re}
\DeclareMathOperator{\IM}{Im}
\newcommand{\dd}{\mathrm{d}}
\newcommand{\ldeg}[1]{\underline{\operatorname{deg}}_{#1}}
\newcommand{\sE}{{\mathcal E}}
\newcommand{\sF}{{\mathcal F}}
\newcommand{\sI}{{\mathcal I}}
\newcommand{\sV}{{\mathcal V}}
\newcommand{\CC}{{\mathbb C}}
\newcommand{\NN}{{\mathbb N}}
\newcommand{\PP}{{\mathbb P}}
\newcommand{\QQ}{{\mathbb Q}}
\newcommand{\RR}{{\mathbb R}}
\title{Graphical functions in parametric space}
\author{Marcel Golz, Erik Panzer and Oliver Schnetz}
\begin{document}
\begin{abstract}
Graphical functions are positive functions on the punctured complex plane $\CC\setminus\set{0,1}$ which arise in quantum field theory. We generalize a parametric integral representation for graphical functions due to Lam, Lebrun and Nakanishi, which implies the real analyticity of graphical functions.
Moreover we prove a formula that relates graphical functions of planar dual graphs.
\end{abstract}
\maketitle

\section{Introduction}
One main problem in perturbative quantum field theory is the calculation of Feynman integrals (see e.g.\ \cite{ItzyksonZuber}). As a new tool for this task, graphical functions were introduced by the third author in \cite{GraphicalFunctions}. Basically, these are special classes of massless Feynman integrals (3-point functions) that can be understood as single-valued functions on the punctured complex plane $\CC\setminus\set{0,1}$.
They are powerful tools in multi-loop calculations, see e.g.\ \cite{ZigZag,Coaction}.

A traditional method to study Feynman integrals is to represent them in a parametric version, where one integrates over variables associated to the edges of a Feynman graph \cite{ItzyksonZuber}. In many cases of interest, these integrals can be computed in terms of multiple polylogarithms, using a method developed by F.~Brown \cite{Brown:SomeFI,Brown:TwoPoint} and the second author \cite{Panzer:HyperInt,Panzer:PhD}. The combination of graphical functions and this parametric integration (using the formulas derived in this article) has recently provided a breakthrough in the calculation of primitive log-divergent amplitudes of graphs with up to eleven independent cycles (`loops') \cite{Coaction}.

In a complete quantum field theoretical calculation one encounters naive singularities which are most frequently treated by the `dimensional regularization scheme' which demands the generalization to arbitrary space-time dimensions (away from the classical four dimensions). The parametric representation is the cleanest way to define Feynman integrals in non-integer `dimensions'.
In this article, we derive fundamental formulas and results for graphical functions in parametric representations for arbitrary dimensions.

Apart from \cite{Coaction}, first applications of the results of this article include the calculation of the beta function and field anomalous dimension of minimally subtracted $(4-\varepsilon)$-dimensional $\phi^4$ theory to six and seven loops by the third author \cite{Schnetz6loops}.

\subsection{Feynman integrals in position space}
A Feynman graph is a graph $G$ with a distinguished subset $\sV_G^\Ext \subseteq \sV_G$ of \emph{external} vertices (the remaining vertices $\sV_G^\Int = \sV_G\setminus \sV_G^\Ext$ are called
\emph{internal}). We often suppress the subscript $G$ and we use roman capital letters for cardinalities, so e.g.\ $\sV^\Ext=\sV_G^\Ext$ and $V^\Ext=\abs{\sV^\Ext}$.
We fix the dimension\footnote{%
In two dimensions ($\lambda=0$), non-trivial graphical functions \eqref{eq:gf-xspace} always diverge. However one can redefine $\lambda\nu_e=:\nu_e'\in\RR$ as the edge weights and all of the following results extend to this case.}
\begin{equation*}
	d = 2\lambda+2 > 2
\end{equation*}
and associate to every vertex $v$ of $G$ a $d$-dimensional vector $x_v\in\RR^d$.
An edge $e$ between vertices $u$ and $v$ corresponds to the quadratic form $Q_e$ which is the square of the Euclidean distance between $x_u$ and $x_v$,
\begin{equation}\label{eq:euclidean-norm}
	Q_e
	=\norm{x_u-x_v}^2
	=\sum_{i=1}^d (x_u^i - x_v^i)^2.
\end{equation}
Moreover, every edge $e$ has an edge weight $\nu_e\in\RR$.
Then the Feynman integral associated to $G$ in position space is defined as
\begin{equation}\label{eq:gf-xspace}
	f_G^{(\lambda)}(x)
	=\left(\prod_{v\in\sV^{\Int}}\int_{\RR^d}\frac{\dd^dx_v}{\pi^{d/2}}\right)
	\frac{1}{\prod_e Q_e^{\lambda\nu_e}},
\end{equation}
where the first product is over all internal vertices of $G$ and the second product is over all edges of $G$. Note that $f_G^{(\lambda)}(x)$ is a function of the external vectors $x=(x_v)_{v\in\sV^\Ext}$ which we always assume to be pairwise distinct ($x_v\neq x_w$ for $v\neq w$).

The convergence of \eqref{eq:gf-xspace} is equivalent to two conditions named `infrared' and `ultraviolet' (this weighted analog of \cite[Lemma~3.4]{GraphicalFunctions} rests on \emph{power counting} \cite{LowensteinZimmermann}):
\begin{itemize}
	\item The graph $G$ is called \emph{ultraviolet convergent} if
\begin{equation}\label{ultraviolet}
	\lambda \nu_g < \tfrac{d}{2} (V_g-1)
\end{equation}
holds for all induced%
\footnote{A subgraph $g$ is induced when every edge of $G$ which has both endpoints in $\sV_g$ belongs to $g$.}
subgraphs $g$ with $\abs{\sV_g \cap \sV^\Ext} \leq 1$.
Here we write
\begin{equation*}
	\nu_g
	=\sum_{e\in\sE_g} \nu_e
\end{equation*}
and denote the sets of vertices and edges of $g$ with $\sV_g$ and $\sE_g$.

	\item
		A vertex $v\in\sV_g$ of a subgraph $g$ of $G$ is called $g$-internal if it is internal ($v \in \sV^\Int$) and all edges of $G$ which are incident to $v$ also belong to $g$. We write $V_g^\Int$ for the number of such vertices.
		The graph $G$ is called \emph{infrared convergent} if
\begin{equation}\label{infrared}
	\lambda\nu_g > \tfrac{d}{2} V_g^\Int
\end{equation}
		holds for all subgraphs $g$ of $G$ which satisfy $V_g^\Int>0$ and contain only edges which are incident to at least one $g$-internal vertex.
\end{itemize}
\begin{ex}\label{ex:g4-convergence}
	In case of the graph $G_4$ from figure~\ref{fig:g4g7}, there are three ultraviolet conditions of the form $\lambda \nu_e< \frac{d}{2}$ (one for each edge $e$) and one infrared condition $\lambda \nu_{G_4} > \frac{d}{2}$ (from the full subgraph $g=G_4$).
\end{ex}

\subsection{Graphical functions}
In the special case of three external vertices, we label them with $0$, $1$ and $z$. 
Note that $f_G^{(\lambda)}$ is invariant under the Euclidean group, so we may translate $x_0$ to the origin and rotate $x_1$ and $x_z$ into the plane $\RR^2\times\set{0}^{d-2}$ which we identify with the complex numbers $\CC$. The \emph{graphical function}
\begin{equation*}
	f_G^{(\lambda)}(z)\colon
	\CC\setminus\set{0,1} \longrightarrow \RR_+
\end{equation*}
is a parametrization of $f_G^{(\lambda)}(x)$ defined in terms of a complex variable $z\neq 0,1$ via
\begin{equation}\label{externalvectors}
	x_0 = (0,\ldots,0)^t,
	\quad
	x_1 = (1,0,\ldots,0)^t
	\quad\text{and}\quad
	x_z = (\RE z,\IM z,0,\ldots,0)^t.
\end{equation}
Graphical functions were introduced in \cite{GraphicalFunctions} basically as a tool for calculating Feynman periods in $\phi^4$ quantum field theory (see also \cite{Coaction,Drummond:Ladders,Todorov}).
However, they can also appear in amplitudes and correlation functions, see for example \cite{OffShellConformal}.

In \cite{GraphicalFunctions} `completions' of graphical functions were defined. In this article, however, we use uncompleted graphs.
\begin{figure}%
	\includegraphics{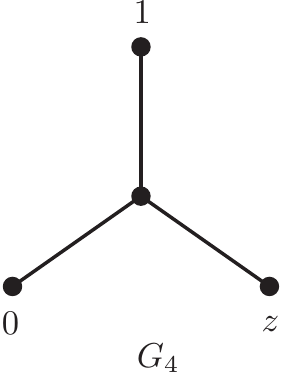} \hspace{3cm}
	\includegraphics{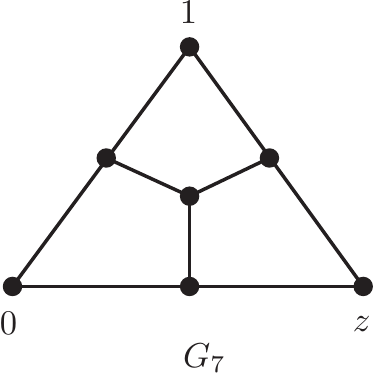}
	\caption{Examples of connected graphs with four and seven vertices in total and three external vertices labeled $0$, $1$ and $z$.}%
	\label{fig:g4g7}%
\end{figure}
\begin{ex}\label{ex:g4}
	In $d=4$ dimensions and with edge weights $\nu_e=1$, the graph $G_4$ of Figure~\ref{fig:g4g7} has a convergent graphical function (see example~\ref{ex:g4-convergence}). It is (see \cite{GraphicalFunctions,Todorov})
\begin{equation*}
	f_{G_4}^{(1)}(z)
	= \int_{\RR^4} \frac{\dd^4 x}{\pi^2} \frac{1}{\norm{x}^2 \norm{x-x_1}^2 \norm{x-x_z}^2}
	= \frac{4{\mathrm i}D(z)}{z-\zz}
\end{equation*}
in terms of the Bloch-Wigner dilogarithm 
$ D(z)=\IM (\Li_2(z)+\log(1-z)\log\abs{z}) $.
\end{ex}
The Bloch-Wigner dilogarithm $D(z)$ is a single-valued version of the dilogarithm $\Li_2(z)=\sum_{k=1}^\infty z^k/k^2$. It is real analytic on $\CC\setminus\set{0,1}$ and antisymmetric under complex conjugation $D(z)=-D(\zz)$. These properties of the Bloch-Wigner dilogarithm lift to general properties of graphical functions:
\begin{thm}\label{generalthm}
 Let $G$ be a graph which fulfills the ultraviolet and infrared conditions \eqref{ultraviolet} and \eqref{infrared}.
 Then the graphical function
$f_G^{(\lambda)}\colon \CC\setminus\set{0,1} \longrightarrow \RR_+$
has the following general properties:
\begin{enumerate}
	\item[(G1)] $f_G^{(\lambda)}(z)=f_G^{(\lambda)}(\zz)$,
	\item[(G2)] $f_G^{(\lambda)}$ is single-valued and
	\item[(G3)] $f_G^{(\lambda)}$ is real analytic on $\CC\setminus\{0,1\}$.
\end{enumerate}
\end{thm}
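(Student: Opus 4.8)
The plan is to prove the three properties more or less simultaneously, using a parametric (Schwinger) representation of $f_G^{(\lambda)}(x)$ together with the explicit dependence of the relevant polynomials on the external data. First I would record the parametric representation of \eqref{eq:gf-xspace}: inserting $Q_e^{-\lambda\nu_e} = \Gamma(\lambda\nu_e)^{-1}\int_0^\infty \alpha_e^{\lambda\nu_e-1} e^{-\alpha_e Q_e}\,\dd\alpha_e$, performing the Gaussian integrals over the internal vertices $x_v\in\RR^d$, and using projective invariance, one obtains an expression of the schematic form
\begin{equation*}
	f_G^{(\lambda)}(x)
	= c_G \int_{\Delta} \frac{\prod_e \alpha_e^{\lambda\nu_e-1}}{\psi_G^{d/2}}
	\exp\!\left(-\frac{\varphi_G(x)}{\psi_G}\right)\,\Omega,
\end{equation*}
where $\psi_G$ is the first Symanzik (Kirchhoff) polynomial, $\varphi_G(x)$ is the second Symanzik polynomial, the integral is over a simplex in the $\alpha_e$, and $c_G$ collects Gamma factors (this is the weighted, arbitrary-$d$ generalization of Lam--Lebrun--Nakanishi that the paper announces). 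The key structural fact is that $\psi_G$ does not depend on $x$ at all, and $\varphi_G(x)$ is an $\RR$-linear combination of the scalar products $Q_{vw} = \norm{x_v-x_w}^2$ over pairs of external vertices, with coefficients that are polynomials in the $\alpha_e$ with nonnegative coefficients.

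Granting such a representation, (G1) is immediate: under $z\mapsto\zz$ the configuration \eqref{externalvectors} is replaced by its mirror image in the first coordinate axis, which is a Euclidean motion of $\RR^d$; since all $Q_{vw}$ are invariant under that motion, $\varphi_G$ and hence the whole integrand is unchanged, giving $f_G^{(\lambda)}(z)=f_G^{(\lambda)}(\zz)$. For (G3) I would argue that, after translating $x_0$ to the origin, $\varphi_G/\psi_G$ is a \emph{polynomial} in the six real coordinates of $x_1,x_z$ (for fixed $\alpha$), hence the integrand is jointly real analytic in $(\alpha,x_1,x_z)$ on the region where $\psi_G>0$; to promote analyticity of the integrand to analyticity of the integral I would use the standard criterion for differentiating under the integral sign together with locally uniform integrable bounds. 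Those bounds are exactly where the ultraviolet and infrared conditions \eqref{ultraviolet}, \eqref{infrared} enter: in the parametric picture they guarantee that the integral over the simplex converges, and that after differentiating any fixed number of times with respect to the external coordinates one still has a locally uniformly integrable majorant (the derivatives only produce extra polynomial factors in $\alpha$ times the same exponential, and the convergence conditions are robust under such insertions, uniformly for $(x_1,x_z)$ in a compact subset of the configuration space with $z\neq 0,1$). For (G2), single-valuedness on $\CC\setminus\{0,1\}$ is a consequence of being a well-defined $\RR$-valued function of $z$ on that domain together with (G3): $f_G^{(\lambda)}$ is a genuine function of the point $z\in\CC\setminus\{0,1\}$ (the integral \eqref{eq:gf-xspace} manifestly depends only on the geometry of $\{x_0,x_1,x_z\}$, hence only on $z$, not on a choice of branch), and a real-analytic function on a domain is in particular single-valued there; alternatively one sees it directly because the parametric integrand depends on $z$ only through $\abs{z}^2=z\zz$ and $\abs{1-z}^2=(1-z)(1-\zz)$, which are single-valued.

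The main obstacle I anticipate is not any individual property but establishing the parametric representation itself with full rigour in the weighted, non-integer-$d$ setting: one must justify the interchange of the Schwinger integrals with the (conditionally convergent) position-space integrals, carry out the iterated Gaussian integration over the internal vertices and identify the resulting determinant with $\psi_G$ and the resulting quadratic form with $\varphi_G/\psi_G$ via the matrix-tree/all-minors theorem, and verify that the $\alpha$-integral that remains is precisely governed by the power-counting conditions \eqref{ultraviolet} and \eqref{infrared}. Once that representation and its convergence region are in hand, properties (G1)--(G3) follow by the soft arguments above; I would therefore organize the proof so that Theorem~\ref{generalthm} is deduced as a corollary of the parametric representation, deferring the analytic bookkeeping of the bounds to the lemma that establishes that representation.
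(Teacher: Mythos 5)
Your overall architecture matches the paper's: establish a Schwinger/Feynman parametric representation first (this is the paper's Theorem~\ref{dualparam}, whose intermediate exponential form $\tilde{\Psi}^{-d/2}\exp(-\tilde{\Phi}_G/\tilde{\Psi})$ is exactly your $\psi_G^{-d/2}\exp(-\varphi_G/\psi_G)$), and then read off (G1) and (G2) from it; those two arguments are sound and essentially identical to the paper's. However, your argument for (G3) has a genuine gap. You propose to ``promote analyticity of the integrand to analyticity of the integral'' via differentiation under the integral sign with locally uniform integrable majorants. That machinery yields only $C^\infty$ smoothness, not real analyticity: an integral of real-analytic integrands with dominated derivatives of every order can fail to be real analytic, because the radii of convergence of the integrand's Taylor expansions need not be bounded below uniformly over the (non-compact, after removing the projective scaling) integration region. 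The standard counterexample is $\int_0^\infty e^{-t}(1+x^2t^2)^{-1}\,\dd t$, whose Taylor coefficients at $x=0$ grow like $(2n)!$. So the step from ``all derivatives exist and are dominated'' to ``real analytic'' is exactly the missing idea.

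The paper closes this gap by complexifying: it treats the squared distances $s_{i,j}=\norm{x_i-x_j}^2$ as independent \emph{complex} variables, shows that on the domain $\RE s_{i,j}>\varepsilon$ the parametric integrand is uniformly dominated by the (convergent) integrand of a real configuration $\hat{x}$ rescaled so that $\max_{i<j}\hat{s}_{i,j}=\varepsilon$ (using $\abs{\tilde{\Phi}_G(\alpha,s)}\geq\RE\tilde{\Phi}_G(\alpha,s)>\tilde{\Phi}_G(\alpha,\hat{x})$), and then invokes the theorem on holomorphic dependence of parameter integrals (Theorem~\ref{holomorphic}) to conclude \emph{holomorphy} in $s$; real analyticity of $z\mapsto f_G^{(\lambda)}(z)$ then follows by composing with the real-analytic maps $z\mapsto(1,z\zz,(z-1)(\zz-1))$. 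Two smaller points: (i) your displayed formula is not a well-defined projective integral, since $\exp(-\varphi_G/\psi_G)$ is homogeneous of degree one rather than zero --- you want the affine integral over $\RR_+^{E_G}$ at that stage; and (ii) your representation with $\Gamma(\lambda\nu_e)^{-1}\alpha_e^{\lambda\nu_e-1}$ presupposes $\lambda\nu_e>0$, whereas the theorem allows arbitrary real weights subject to convergence; the paper handles $\lambda\nu_e\leq 0$ by inserting derivatives $\partial_{\alpha_e}^{n_e}$ with $n_e+\lambda\nu_e>0$, which then requires the degree-counting argument of Proposition~\ref{prop1} to re-verify convergence of the differentiated integrand.
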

It was not possible to prove real analyticity (G3) in full generality with the methods in \cite{GraphicalFunctions}.
In this article we obtain (G3) as a consequence of an alternative integral representation of graphical functions. In this representation, the integration variables $\alpha_e$ (known as \emph{Schwinger} or \emph{Feynman} parameters) are associated to edges of the graph \cite{ItzyksonZuber,BEK}.

Although we are mainly interested in the case of three external vertices 0, 1, $z$, our results effortlessly generalize to an arbitrary number $V^\Ext$ of external vertices.

\subsection{Graph polynomials}
We will use certain polynomials in the edge variables $\alpha_e$ that were defined and studied by F.~Brown and K.~Yeats \cite{BrownYeats:WD}.
\begin{defn}\label{spanningforest}
Let $p=\set{p_1,\ldots ,p_n}$ denote a partition of a subset of the vertices of a graph $G$ (so $p_i\subseteq \sV$ and $p_i\cap p_j = \emptyset$ when $i\neq j$). We write $\sF_G^p$ for the set of all spanning forests $T_1\cup\ldots\cup T_n$ consisting of exactly $n$ (pairwise disjoint) trees $T_i$ such that $p_i \subseteq T_i$.
The \emph{dual spanning forest polynomial} associated to $p$ is
\begin{equation}\label{eq:dual-forpol}
	\tilde{\Psi}_G^p(\alpha)
	\defas\sum_{F\in\sF_G^p}\;\prod_{e\in F}\alpha_e.
\end{equation}
We suppress curly brackets in the notation, so for example $\tilde{\Psi}_G^{01z} = \tilde{\Psi}_G^{\set{\set{0,1,z}}}$ denotes the sum of spanning forests ($n=1$), while the partition in $\tilde{\Psi}_G^{01,z}$ is $\set{\set{0,1},\set{z}}$ ($n=2$).
Say we call the external vertices $1,\ldots,V^\Ext$, then we write
$\tilde{\Psi} \defas \tilde{\Psi}^{1,\ldots,V^\Ext}$
for the partition into singletons ($n = V^\Ext$). The partitions with $n=V^\Ext-1$ have exactly one part containing two external vertices. We collect them in the polynomial
\begin{equation}\label{eq:dual-phi}
	\tilde{\Phi}_G(\alpha,x)
	\defas\sum_{1\leq i<j\leq V^{\Ext}} 
	\norm{x_i-x_j}^2
	\tilde{\Psi}_G^{ij,(k)_{k\neq i,j}}(\alpha) 
	.
\end{equation}
\end{defn}
%The spanning forest polynomial $\Psi_G^{01z}$ is the graph polynomial $\Psi_G$ while the spanning forest polynomial
%$\Psi_G^{0,1,z}$ equals the graph polynomial $\Psi_{G/\Ext}$ of the graph $G/\Ext$ that one obtains from $G$ by identifying the three external vertices without changing the edge labels.
\begin{ex}
	If we label the three edges adjacent to $0$, $1$ and $z$ in $G_4$ (see Figure~\ref{fig:g4g7}) by $1$, $2$ and $3$, then we find the polynomials
\begin{align*}
%	\Psi_{G_4}^{1z,0} &= \alpha_1,
%	&
%	\Psi_{G_4}^{01z} &= 1,%\Psi_{G_4} =1,
%	\\
%	\Psi_{G_4}^{0z,1} &= \alpha_2,
%	&
%	\Psi_{G_4}^{0,1,z}
%	&= \alpha_2\alpha_3+\alpha_1\alpha_3+\alpha_1\alpha_2,
%	\\
%	\Psi_{G_4}^{01,z} &= \alpha_3,
%	&
%	\Phi_{G_4}
%	&= \alpha_1(z-1)(\zz-1)+\alpha_2z\zz+\alpha_3.
	\tilde{\Psi}_{G_4}^{1z,0} &= \alpha_2 \alpha_3,
	&
	\tilde{\Psi}_{G_4}^{01z} &= \alpha_1 \alpha_2 \alpha_3,%\Psi_{G_4} =1,
	\\
	\tilde{\Psi}_{G_4}^{0z,1} &= \alpha_1 \alpha_3,
	&
	\tilde{\Psi}_{G_4}^{0,1,z}
	&= \alpha_1+\alpha_2+\alpha_3,
	\\
	\tilde{\Psi}_{G_4}^{01,z} &= \alpha_1 \alpha_2,
	&
	\tilde{\Phi}_{G_4}
	&= (z-1)(\zz-1)\alpha_2 \alpha_3+z\zz \alpha_1 \alpha_3 + \alpha_1 \alpha_2.
\end{align*}
Here $\zz$ denotes the complex conjugate of $z\in\CC\setminus\set{0,1}$ and we used \eqref{externalvectors}.
\end{ex}
A parametric (i.e.\ depending on the edge parameters $\alpha_e$) formula for (massive) position space Feynman integrals in four-dimensional Minkowski space was discovered long ago \cite{Nakanishi:Hankel,LamLebrun} and is also discussed in the book \cite[Equation~(8-33)]{Nakanishi:Book}. In the massless, Euclidean case it becomes a parametric formula for graphical functions. We give an extension to arbitrary dimensions which also allows for negative edge weights.\footnote{The validity for arbitrary dimensions is straightforward and was noticed already in \cite[Remark~7-10]{Nakanishi:Book}.}
\begin{thm}\label{dualparam}
Let $G$ be a non-empty graph with $V^{\Int}_G$ internal vertices and edges labeled $1,2,\ldots,E_G$.
We assume that its graphical function \eqref{eq:gf-xspace} converges, meaning that $G$ is subject to \eqref{ultraviolet} and \eqref{infrared}, and define the \emph{superficial degree of divergence}
\begin{equation}\label{defM}
	M_G \defas \lambda \nu_G-\tfrac{d}{2} V^\Int_G.
\end{equation}
Then for any set of non-negative integers $n_e$ such that $n_e+\lambda\nu_e>0$, we have the following \emph{dual parametric representation} of $f_G^{(\lambda)}$ as a convergent projective integral:
\begin{equation}\label{fdualparam}
	f_G^{(\lambda)}(x)
	=
	\frac{(-1)^{\sum_e\!n_e}\,\Gamma(M_G)}{\prod_e\Gamma(n_e+\lambda\nu_e)}
	\int_\Delta \Omega 
	\Big[\prod_e\alpha_e^{n_e+\lambda\nu_e-1}\partial_{\alpha_e}^{n_e}\Big]
	\frac{1}{\tilde{\Phi}_G^{M_G}\tilde{\Psi}_G^{d/2-M_G}},
\end{equation}
where the integration domain is given by the positive coordinate simplex
\begin{equation*}
	\Delta
	=\{(\alpha_1:\alpha_2:\ldots:\alpha_{E_G})\colon \alpha_e>0\text{ for all }e\in\{1,2,\ldots,E_G\}\}\subset\PP^{E_G-1}\RR
\end{equation*}
and we set
\begin{equation*}
	\Omega
	=\sum_{e=1}^{E_G}(-1)^{e-1}\alpha_e \dd\alpha_1\wedge\ldots\wedge\widehat{\dd\alpha_e}\wedge\ldots\wedge \dd\alpha_{E_G}
	.
\end{equation*}
\end{thm}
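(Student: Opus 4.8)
The plan is to derive \eqref{fdualparam} directly from the position-space definition \eqref{eq:gf-xspace} in three steps --- Schwinger parametrisation, Gaussian integration over the internal vertices, and a projective rescaling --- carried out in that order and, crucially, for each admissible choice of the $n_e$ separately, so that the independence of the right-hand side of \eqref{fdualparam} on the $n_e$ is obtained for free. \emph{Schwinger parametrisation.} For a single edge, from $\int_0^\infty\alpha^{s-1}e^{-\alpha Q}\,\dd\alpha=\Gamma(s)Q^{-s}$ (valid for $\RE s>0$, $Q>0$) and $\partial_\alpha^{n}e^{-\alpha Q}=(-Q)^{n}e^{-\alpha Q}$ one gets
\[
	Q_e^{-\lambda\nu_e}
	=\frac{(-1)^{n_e}}{\Gamma(n_e+\lambda\nu_e)}\int_0^\infty\dd\alpha_e\,\alpha_e^{n_e+\lambda\nu_e-1}\,\partial_{\alpha_e}^{n_e}e^{-\alpha_e Q_e},
\]
an absolutely convergent one-dimensional integral precisely when $n_e+\lambda\nu_e>0$; this derivative form is what accommodates non-positive edge weights. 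Substituting it for every edge into \eqref{eq:gf-xspace}, interchanging the vertex integrals with the $\alpha$-integrals and pulling each $\partial_{\alpha_e}^{n_e}$ out of $\int_{\RR^d}\dd^dx_v$ (differentiation under the integral sign), the exponent becomes $\sum_e\alpha_e Q_e=\sum_e\alpha_e\norm{x_{u(e)}-x_{v(e)}}^2$, with $u(e),v(e)$ the endpoints of $e$.

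\emph{Gaussian integration.} Along each of the $d$ coordinate axes this exponent is the quadratic form $\langle x,L(\alpha)\,x\rangle$ of the weighted graph Laplacian $L(\alpha)$. Separating internal and external vertices and completing the square,
\[
	\prod_{v\in\sV^{\Int}}\int_{\RR^d}\frac{\dd^dx_v}{\pi^{d/2}}\,e^{-\sum_e\alpha_e Q_e}
	=\bigl(\det L_{\Int\Int}\bigr)^{-d/2}\exp\!\bigl(-\langle x_{\Ext},(L/L_{\Int\Int})\,x_{\Ext}\rangle\bigr),
\]
where $L_{\Int\Int}$ is the principal minor on the internal vertices (positive definite for $\alpha_e>0$, provided every connected component of $G$ meets $\sV^{\Ext}$) and $L/L_{\Int\Int}=L_{\Ext\Ext}-L_{\Ext\Int}L_{\Int\Int}^{-1}L_{\Int\Ext}$ is its Schur complement. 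Kirchhoff's matrix--tree theorem in deletion form gives $\det L_{\Int\Int}=\tilde\Psi_G(\alpha)$, and the all-minors matrix--tree theorem gives $\langle x_{\Ext},(L/L_{\Int\Int})\,x_{\Ext}\rangle=\tilde\Phi_G(\alpha,x)/\tilde\Psi_G(\alpha)$ with $\tilde\Psi_G,\tilde\Phi_G$ as in Definition~\ref{spanningforest}; this is the classical identification of the position-space graph polynomials, cf.\ \cite{BrownYeats:WD}. Writing $\mathcal G(\alpha)\defas\tilde\Psi_G(\alpha)^{-d/2}\exp\!\bigl(-\tilde\Phi_G(\alpha,x)/\tilde\Psi_G(\alpha)\bigr)$, this reduces \eqref{eq:gf-xspace} to
\[
	f_G^{(\lambda)}(x)
	=\prod_e\frac{(-1)^{n_e}}{\Gamma(n_e+\lambda\nu_e)}\int_{(0,\infty)^{E_G}}\!\prod_e\dd\alpha_e\;\Bigl[\prod_e\alpha_e^{n_e+\lambda\nu_e-1}\partial_{\alpha_e}^{n_e}\Bigr]\mathcal G(\alpha).
\]

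\emph{Projective rescaling.} Apply the elementary identity $\int_{(0,\infty)^{E_G}}\prod_e\dd\alpha_e\,F(\alpha)=\int_\Delta\Omega\int_0^\infty\dd t\,t^{E_G-1}F(t\alpha)$, which one checks on $F(\alpha)=e^{-\sum_e\alpha_e}$, to $F=[\prod_e\alpha_e^{n_e+\lambda\nu_e-1}\partial_{\alpha_e}^{n_e}]\mathcal G$. The graph polynomials are homogeneous, $\tilde\Psi_G(t\alpha)=t^{V^{\Int}_G}\tilde\Psi_G(\alpha)$ and $\tilde\Phi_G(t\alpha)=t^{V^{\Int}_G+1}\tilde\Phi_G(\alpha)$, so $\tilde\Phi_G(t\alpha)/\tilde\Psi_G(t\alpha)=t\,\tilde\Phi_G(\alpha)/\tilde\Psi_G(\alpha)$; together with $(\partial_{\alpha_e}^{n_e}g)(t\alpha)=t^{-n_e}\partial_{\alpha_e}^{n_e}[g(t\alpha)]$ and the definition \eqref{defM} of $M_G$, all powers of $t$ collect to $t^{M_G-1}$, leaving $\prod_e\partial_{\alpha_e}^{n_e}$ applied to $\tilde\Psi_G^{-d/2}e^{-t\tilde\Phi_G/\tilde\Psi_G}$. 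Interchanging $\int_0^\infty\dd t$ with $\prod_e\partial_{\alpha_e}^{n_e}$ and using $\int_0^\infty\dd t\,t^{M_G-1}e^{-t\,\tilde\Phi_G/\tilde\Psi_G}=\Gamma(M_G)(\tilde\Psi_G/\tilde\Phi_G)^{M_G}$ --- which needs $M_G>0$, i.e.\ precisely the infrared inequality \eqref{infrared} for the subgraph $g=G$ --- together with $\tilde\Psi_G^{-d/2}\tilde\Psi_G^{M_G}\tilde\Phi_G^{-M_G}=\tilde\Phi_G^{-M_G}\tilde\Psi_G^{-(d/2-M_G)}$, one arrives at \eqref{fdualparam}.

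\emph{Main obstacle.} The delicate point is the absolute convergence underpinning every interchange of integrations and every differentiation under the integral sign above --- equivalently, the convergence of the projective integral in \eqref{fdualparam} itself. This is a power-counting matter: on the facet of $\Delta$ along which a subset $S$ of the $\alpha_e$ tends to zero, $\tilde\Psi_G$ and $\tilde\Phi_G$ factorise according to contraction of $S$ (the leading part) and deletion of $S$, and the degree of divergence along $S$ is harmless exactly when \eqref{ultraviolet} holds (if $S$ spans a subgraph meeting $\sV^{\Ext}$ in at most one vertex) or \eqref{infrared} holds (otherwise) --- the same inequalities that govern convergence of \eqref{eq:gf-xspace}, in the spirit of \cite[Lemma~3.4]{GraphicalFunctions}. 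I expect this bookkeeping, rather than any single algebraic identity, to be the bulk of the work; by contrast the independence of \eqref{fdualparam} on the $n_e$ is automatic here, and can also be seen directly by integrating by parts in $\alpha_e$ in the projective integral, the boundary contributions vanishing by \eqref{ultraviolet}.
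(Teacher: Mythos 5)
Your proposal follows essentially the same route as the paper: the Schwinger trick in its derivative form to accommodate non-positive weights, Gaussian integration over the internal vertices with $\det L^{\mathrm{ii}}=\tilde\Psi_G$ and the Schur complement giving $\tilde\Phi_G/\tilde\Psi_G$ via the (all-minors) matrix-tree theorem, and a projective rescaling in which the powers of $t$ collect to $t^{M_G-1}$ and integrate to $\Gamma(M_G)$. The only notable difference is that you outsource the identification of the Schur complement's quadratic form with $\tilde\Phi_G/\tilde\Psi_G$ to a citation, whereas the paper proves it by an explicit gluing bijection on spanning forests (the cases $k\neq\ell$ and $k=\ell$, with the $L^{\mathrm{ee}}$ terms cancelling the extra contributions); otherwise the arguments, including the use of positivity and integrable majorants to justify Fubini and differentiation under the integral, coincide.
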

\begin{remark}
For integer $\lambda\nu_e \leq 0$ one may set $n_e=1-\lambda\nu_e$ such that the integration over $\alpha_e$ trivializes to the evaluation at $\alpha_e = 0$ of a $(-\lambda\nu_e)$'s derivative.
\end{remark}
Readers who are not familiar with projective integrals can specialize to an affine integral by setting $\alpha_1=1$ and integrating the remaining $\alpha_e$ ($e>1$) from $0$ to $\infty$.

Note that $M_G$ is restricted by convergence: From \eqref{infrared} with $g=G$ and from \eqref{ultraviolet} with $g=G\setminus(\sV^\Ext\setminus\set{v})$ (for some $v\in\sV^\Ext$) we obtain for a graph $G$ with no edges between external vertices that
\begin{equation*}
	0<M_G<\lambda \min_{v\in V^\Ext} \sum_{w\in\sV^\Ext\setminus\set{v}} \nu_w,
\end{equation*}
where $\nu_w$ is the sum of weights $\nu_e$ of all edges $e$ adjacent to the external vertex $w$.

One immediate advantage of the parametric representation is that for many graphs with not more than nine vertices, the integral \eqref{fparam} can be calculated (in terms of polylogarithms) with methods developed by F.~Brown \cite{Brown:SomeFI} and the second author \cite{Panzer:PhD,Panzer:HyperInt}.

Note that we obtain another integral representation via the Cremona transformation $\alpha_e \rightarrow 1/\alpha_e$:
\begin{cor}\label{paramthm}
Let $G$ be a non-empty graph with $E_G$ edges.
We assume the convergence of $f_G^{(\lambda)}$ and also that every edge $e$ has a positive weight $\nu_e>0$. Then
\begin{equation}\label{fparam}
	f_G^{(\lambda)}(x)
	=
	\frac{\Gamma(M_G)}{\prod_{e} \Gamma(\lambda\nu_e)}
	\int_\Delta \frac{
		\prod_{e} \alpha_e^{d/2-\lambda \nu_e  - 1}
	}{
		\Phi_G^{M_G}\Psi_G^{d/2-M_G}
	}\Omega,
\end{equation}
where $\Psi_G = \Psi_G^{1,\ldots,V^\Ext}$ and
$ \Phi_G(\alpha,x) = \sum_{i<j} \norm{x_{i} - x_{j}}^2 \Psi_G^{i j, (k)_{k\neq i,j}}(\alpha)$ are defined in terms of the spanning forest polynomials, which are dual to \eqref{eq:dual-forpol}:
\begin{equation}\label{eq:forpol}
	\Psi_G^p(\alpha)
	=\sum_{F\in\sF_G^p} \prod_{e\notin F}\alpha_e
	=\Big(\prod_e\alpha_e\Big)\tilde{\Psi}_G^p(\alpha^{-1}).
\end{equation}
\end{cor}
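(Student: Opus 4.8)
The plan is to derive \eqref{fparam} directly from Theorem~\ref{dualparam} by the Cremona change of variables $\alpha_e\mapsto 1/\alpha_e$ announced just above the statement; note that the hypotheses of the corollary include those of Theorem~\ref{dualparam}. First I would specialise \eqref{fdualparam} to $n_e=0$ for every edge: this is admissible because $\lambda=\tfrac{d-2}{2}>0$ and $\nu_e>0$ imply $n_e+\lambda\nu_e=\lambda\nu_e>0$, and it makes the sign $(-1)^{\sum_e n_e}$ and all derivatives $\partial_{\alpha_e}^{n_e}$ trivial, so that
\[
	f_G^{(\lambda)}(x)=\frac{\Gamma(M_G)}{\prod_e\Gamma(\lambda\nu_e)}\int_\Delta\Omega\,\frac{\prod_e\alpha_e^{\lambda\nu_e-1}}{\tilde{\Phi}_G^{M_G}\,\tilde{\Psi}_G^{d/2-M_G}}.
\]
On $\Delta$ the integrand is positive and, by Theorem~\ref{dualparam}, the integral is finite; moreover $\alpha_e>0\Leftrightarrow 1/\alpha_e>0$, so $\operatorname{cr}\colon(\alpha_1:\cdots:\alpha_{E_G})\mapsto(1/\alpha_1:\cdots:1/\alpha_{E_G})$ is an involutive diffeomorphism of $\Delta$. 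It therefore remains to compute how the integrand and the form $\Omega$ transform under $\operatorname{cr}$ and to recognise the result as the right-hand side of \eqref{fparam}.

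I would carry out this computation in the affine chart $\alpha_1=1$, in which $\Omega=\dd\alpha_2\wedge\cdots\wedge\dd\alpha_{E_G}$ and the domain is $(\alpha_2,\ldots,\alpha_{E_G})\in(0,\infty)^{E_G-1}$. Writing $\beta_e=1/\alpha_e$ (so $\beta_1=1$ as well), the substitution contributes the Jacobian factor $\prod_{e\geq 2}\beta_e^{-2}$. For the graph polynomials the key input is \eqref{eq:forpol}, which, since $\operatorname{cr}$ is an involution, also reads $\tilde{\Psi}_G^p(\alpha)=\big(\prod_e\alpha_e\big)\Psi_G^p(\alpha^{-1})$; in the chart $\alpha_1=1$ this becomes $\tilde{\Psi}_G^p(\alpha)=\big(\prod_{e\geq 2}\beta_e\big)^{-1}\Psi_G^p(\beta)$, and summing the cases $p=\{ij,(k)_{k\neq i,j}\}$ against $\norm{x_i-x_j}^2$ gives $\tilde{\Phi}_G(\alpha,x)=\big(\prod_{e\geq 2}\beta_e\big)^{-1}\Phi_G(\beta,x)$ likewise. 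Substituting these relations and collecting, for each $e\geq 2$, the powers of $\beta_e$ coming from $\prod_e\alpha_e^{\lambda\nu_e-1}$, from $\tilde{\Phi}_G^{-M_G}$, from $\tilde{\Psi}_G^{-(d/2-M_G)}$ and from the Jacobian, yields the exponent $(1-\lambda\nu_e)+M_G+(\tfrac{d}{2}-M_G)-2=\tfrac{d}{2}-\lambda\nu_e-1$, while $\Phi_G^{-M_G}$ and $\Psi_G^{-(d/2-M_G)}$ remain. Reinstating the trivial factor $\beta_1^{d/2-\lambda\nu_1-1}=1$ and identifying $\dd\beta_2\wedge\cdots\wedge\dd\beta_{E_G}=\Omega|_{\beta_1=1}$, the integral becomes exactly \eqref{fparam}, with the prefactor $\Gamma(M_G)/\prod_e\Gamma(\lambda\nu_e)$ already in place; the convergence of \eqref{fparam} is inherited from that of \eqref{fdualparam} because $\operatorname{cr}$ is a diffeomorphism of $\Delta$.

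The step that needs the most care — more bookkeeping than obstacle — is keeping track of homogeneity and of the projective measure. One should check that $\prod_e\alpha_e^{\lambda\nu_e-1}\,\tilde{\Phi}_G^{-M_G}\,\tilde{\Psi}_G^{-(d/2-M_G)}$ is homogeneous of degree $-E_G$ (using $M_G=\lambda\nu_G-\tfrac{d}{2}V^\Int_G$ from \eqref{defM}, the identity $V_G-V^\Ext=V^\Int_G$, and the homogeneity degrees $V^\Int_G$ of $\tilde{\Psi}_G$ and $V^\Int_G+1$ of $\tilde{\Phi}_G$), matching the degree $E_G$ of $\Omega$, so that the affine-chart computation genuinely represents the projective integral and its value is chart-independent. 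An alternative is to transform $\Omega$ directly under $\operatorname{cr}$, which multiplies it by a monomial in the $\alpha_e$ up to an overall sign; combined with the homogeneity of the polynomials this reproduces the same identity. I have chosen the chart computation only because it avoids tracking that sign, which is irrelevant for the positive integral anyway; everything else is routine algebra once $n_e=0$ and \eqref{eq:forpol} are in hand.
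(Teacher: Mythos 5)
Your proposal is correct and follows exactly the paper's own (much terser) proof: set $n_e=0$, pass to the affine chart $\alpha_1=1$, invert the remaining $\alpha_e$, and use \eqref{eq:forpol} to convert the dual polynomials; your exponent bookkeeping $(1-\lambda\nu_e)+M_G+(\tfrac{d}{2}-M_G)-2=\tfrac{d}{2}-\lambda\nu_e-1$ and the homogeneity check of degree $-E_G$ are the details the paper leaves implicit. No gaps.
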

\begin{proof} We set $n_e=0$ in \eqref{fdualparam} for all edges $e$ of $G$. We use the affine chart $\alpha_1=1$ in \eqref{fdualparam} and invert all $\alpha_e$, $e>1$. By \eqref{eq:forpol} this gives the integrand in \eqref{fparam} for $\alpha_1=1$. The projective version of this integral is \eqref{fparam}.
\end{proof}

\subsection{Planar duals}
A planar dual $G^\star$ of a Feynman graph $G$ with external vertices $0,1,z$ is a usual planar dual graph to which we add external vertices at `opposite' sides, see Figure~\ref{fig:duals} (a precise description will be given in Definition~\ref{dualdef}).
\begin{figure}
	\includegraphics{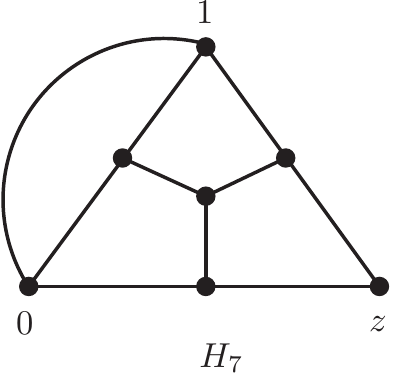} \hspace{2cm}
	\includegraphics{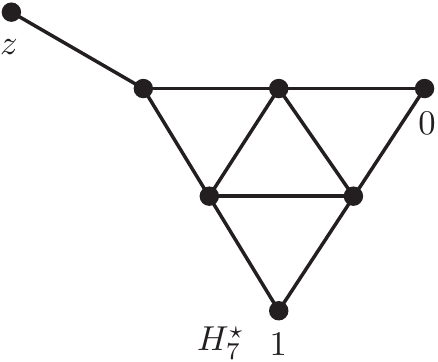}
	\caption{The graphs $H_7^{}$ and $H_7^\star$ are planar duals.}%
	\label{fig:duals}%
\end{figure}
In the case when $M_G=d/2$, graphical functions of dual graphs are related:
\begin{thm}\label{planarthm}
Let $G$ be a connected graph with external vertices $0,1,z$ and edge weights $\nu_e>0$ such that the graphical function $f_G^{(\lambda)}$ converges and
$M_G=d/2$.
Let $G^\star$ be a dual of $G$ and denote by $e^\star$ the edge of $G^\star$ which corresponds to the edge $e$ of $G$.
Let the edge weights $\nu_{e^\star}$ of $G^\star$ be related to the edge weights $\nu_e$ of $G$ through
\begin{equation}\label{weightsstar}
	\lambda\nu_{e^\star}
	=d/2-\lambda\nu_e.
\end{equation}
Then the graphical functions associated to $G$ and $G^\star$ are multiples of each other:
\begin{equation}\label{eq:planar-dual}
	f_{G^\star}^{(\lambda)}(z)
	=
	f_G^{(\lambda)}(z)
	\prod_e \frac{\Gamma(\lambda\nu_e)}{\Gamma(\lambda\nu_{e^\star})}
	.
\end{equation}
\end{thm}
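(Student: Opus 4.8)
\emph{Strategy.}
The plan is to write both $f_G^{(\lambda)}$ and $f_{G^\star}^{(\lambda)}$ as projective parametric integrals over the \emph{same} simplex and then to match the two integrands using the behaviour of the graph polynomials under planar duality. Planar duality preserves the number of edges, $E_{G^\star}=E_G$; fixing a labelling of the edges of $G^\star$ so that $e^\star$ carries the label of $e$ and identifying the Schwinger parameters $\alpha_{e^\star}=\alpha_e$, both integrals then live on $\Delta\subset\PP^{E_G-1}\RR$ with the same form $\Omega$. From \eqref{weightsstar} one gets $\lambda\nu_{G^\star}=\tfrac{d}{2}E_G-\lambda\nu_G$, while the internal vertices of $G^\star$ correspond to the bounded faces of $G$, of which there are $E_G-V_G+1$ by Euler's formula (the three external vertices of $G$ lie on the unbounded face, which Definition~\ref{dualdef} resolves into the three external vertices of $G^\star$). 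Hence
\begin{equation*}
	M_{G^\star}=\lambda\nu_{G^\star}-\tfrac{d}{2} V^\Int_{G^\star}=\bigl(\tfrac{d}{2}E_G-\lambda\nu_G\bigr)-\tfrac{d}{2}\bigl(E_G-V_G+1\bigr)=d-M_G=\tfrac{d}{2} .
\end{equation*}
Moreover $\lambda\nu_{e^\star}=d/2-\lambda\nu_e>0$ by the one-edge case of \eqref{ultraviolet}, and — this is the standard interchange of ultraviolet and infrared power counting under planar duality, using \eqref{weightsstar} — the conditions \eqref{ultraviolet} and \eqref{infrared} for $G^\star$ are equivalent to \eqref{infrared} and \eqref{ultraviolet} for $G$; so $f_{G^\star}^{(\lambda)}$ converges and Corollary~\ref{paramthm} applies to it.

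\emph{Matching the two representations.}
For $G$ I apply Theorem~\ref{dualparam} with all $n_e=0$; since $M_G=d/2$ the factor $\tilde\Psi_G^{\,d/2-M_G}=\tilde\Psi_G^{\,0}=1$ drops out and
\begin{equation*}
	f_G^{(\lambda)}(x)=\frac{\Gamma(d/2)}{\prod_e\Gamma(\lambda\nu_e)}\int_\Delta\Omega\,\prod_e\alpha_e^{\lambda\nu_e-1}\,\tilde\Phi_G(\alpha,x)^{-d/2}.
\end{equation*}
For $G^\star$ I apply Corollary~\ref{paramthm}; again $\Psi_{G^\star}^{\,d/2-M_{G^\star}}=1$, and since $d/2-\lambda\nu_{e^\star}=\lambda\nu_e$,
\begin{equation*}
	f_{G^\star}^{(\lambda)}(x)=\frac{\Gamma(d/2)}{\prod_e\Gamma(\lambda\nu_{e^\star})}\int_\Delta\Omega\,\prod_e\alpha_{e^\star}^{\lambda\nu_e-1}\,\Phi_{G^\star}(\alpha,x)^{-d/2}.
\end{equation*}
With $\alpha_{e^\star}=\alpha_e$ the two integrands differ only through $\tilde\Phi_G$ versus $\Phi_{G^\star}$, so, taking the ratio of the two $\Gamma$-prefactors, \eqref{eq:planar-dual} (as an identity in $x$, hence in $z$) follows immediately from the polynomial identity
\begin{equation*}
	\Phi_{G^\star}(\alpha,x)=\tilde\Phi_G(\alpha,x)\qquad\text{when }\alpha_{e^\star}=\alpha_e .
\end{equation*}

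\emph{The polynomial identity, and the main obstacle.}
By \eqref{eq:dual-phi} and the definition of $\Phi_{G^\star}$ following \eqref{fparam}, this last identity reduces to $\Psi_{G^\star}^{\,ij,k}(\alpha)=\tilde\Psi_G^{\,ij,k}(\alpha)$ for every way of splitting $\{0,1,z\}$ into a pair $\{i,j\}$ and the remaining singleton $\{k\}$. This is a planar-duality statement for the Brown--Yeats spanning-forest polynomials of \cite{BrownYeats:WD}: edge-set complementation $F\mapsto(\sE_G\setminus F)^\star$ takes spanning forests of $G$ to spanning forests of $G^\star$ (note that the complement of a spanning $2$-forest of $G$ has $E_G-(V_G-2)=V_{G^\star}-2$ edges, exactly the size of a spanning $2$-forest of $G^\star$), and under $\alpha_{e^\star}=\alpha_e$ one has $\prod_{e\in F}\alpha_e=\prod_{e^\star\notin(\sE_G\setminus F)^\star}\alpha_{e^\star}$, so the two monomial sums agree term by term. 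I expect the genuine difficulty of the theorem to lie precisely here: one must read off from the "opposite side" construction in Definition~\ref{dualdef} how the three external vertices of $G^\star$ are placed and labelled, and verify that this complementation bijection carries each pair-plus-singleton partition of $\{0,1,z\}$ onto exactly the correspondingly labelled pair-plus-singleton partition of the external vertices of $G^\star$ — so that the \emph{same} point $x$, i.e.\ the same $z$, appears on both sides of \eqref{eq:planar-dual} rather than a Möbius image of it. Everything outside this combinatorial step is the routine bookkeeping carried out above, and granting the identity $\Phi_{G^\star}=\tilde\Phi_G$ the comparison of the two parametric representations completes the proof.
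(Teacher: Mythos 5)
Your overall route is the same as the paper's: put $f_G^{(\lambda)}$ in the form \eqref{fdualparam} with $n_e=0$, put $f_{G^\star}^{(\lambda)}$ in the form \eqref{fparam}, check $M_{G^\star}=d/2$ via Euler's formula, identify $\alpha_{e^\star}=\alpha_e$, and reduce everything to $\tilde\Phi_G=\Phi_{G^\star}$, i.e.\ to $\tilde\Psi_G^{ij,k}=\Psi_{G^\star}^{ij,k}$ for the three splittings of $\set{0,1,z}$. That bookkeeping is correct. But the one step you explicitly defer --- ``I expect the genuine difficulty of the theorem to lie precisely here \dots granting the identity $\Phi_{G^\star}=\tilde\Phi_G$'' --- \emph{is} the content of the theorem, and your edge-count observation does not establish it. Counting that $\sE_G\setminus F$ has $E_G-(V_G-2)=V_{G^\star}-2$ elements shows the complement has the cardinality of a spanning $2$-forest of $G^\star$; it does not show that $F^\star=\set{e^\star\colon e\notin F}$ is acyclic, nor that it is spanning, nor --- the point you yourself flag --- that it separates the external vertices of $G^\star$ according to the \emph{same} partition $\set{i,j},\set{k}$, which is what guarantees that the same $z$ appears on both sides of \eqref{eq:planar-dual}.

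The paper closes exactly this gap, and does so by passing to the auxiliary graph $G_v$ of Definition~\ref{dualdef}: given $F\in\sF_G^{ij,k}$, the set $T_i=F\cup\set{\set{i,v},\set{k,v}}$ is a spanning tree of $G_v$; by Tutte's duality of spanning trees, $T_i^\star=\set{e^\star\colon e\notin T_i}$ is a spanning tree of $G_v^\star$, so $F^\star=T_i^\star\setminus\set{j,v}^\star$ is a spanning $2$-forest of $G^\star$. The labelling is then pinned down by a cycle argument: $\set{j,v}^\star$ joins the external vertices $i$ and $k$ of $G^\star$, so $F^\star$ cannot connect $i$ to $k$ (else $T_i^\star$ would contain a cycle), and symmetrically it cannot connect $j$ to $k$; hence $F^\star\in\sF_{G^\star}^{i,k}\cap\sF_{G^\star}^{j,k}=\sF_{G^\star}^{ij,k}$, and $(F^\star)^\star=F$ makes the map a bijection. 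You should supply an argument of this kind rather than assume the identity. A secondary, smaller issue: you justify the convergence of $f_{G^\star}^{(\lambda)}$ (needed to invoke Corollary~\ref{paramthm}) by an unproven claim that \eqref{ultraviolet} and \eqref{infrared} interchange under duality for \emph{all} subgraphs; the cleaner argument, which the paper's remark alludes to, is that once the integrands are identified pointwise the parametric integral for $G^\star$ converges because the one for $G$ does, and convergence of the parametric integral is equivalent to convergence of \eqref{eq:gf-xspace} by the positivity/Fubini argument in the proof of Theorem~\ref{dualparam}.
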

Note that ultraviolet convergence (\ref{ultraviolet}) for a single edge $e$ implies $\lambda\nu_e<d/2$, thus $\nu_e^\star>0$. Similarly, positive edge weights in $G$ ensure that the dual graphical function $f_{G^\star}^{(\lambda)}$ is ultraviolet convergent for each single edge $e^\star$ of $G^\star$.
The convergence of $f_{G^\star}^{(\lambda)}$ is ensured by the proof of Theorem~\ref{planarthm}.

If in four dimensions a graph $G$ has edge weights 1 then a dual graph $G^\star$ has also edge weights 1 and the graphical functions are equal if $M_G=2$.

One can also use duality for a planar graph $G$ with $M_G\neq d/2$ if one adds an edge from 0 to 1 of weight $(d/2-M_G)/\lambda$, see the subsequent example~\ref{ex:dual}.
\begin{remark}
	It is well known (see \cite{MinamiNakanishi} for example) that the graphical function of every planar graph $G$ (without restrictions on $\nu_e$ and $d$) is related (by a constant factor) to the \emph{momentum space} Feynman integral associated to $G^{\star}$. What makes Theorem~\ref{planarthm} interesting is that in the particular case when $V^\Ext=3$ and $M_G=d/2$, the momentum- and position space Feynman integrals coincide.
\end{remark}
\begin{ex}\label{ex:dual}
	We want to calculate the $4$-dimensional graphical function of the graph $G_7$ in Figure~\ref{fig:g4g7} with unit edge weights, so $M_{G_7}=1$. To apply Theorem~\ref{planarthm} we add an edge between 0 and 1 (see Figure~\ref{fig:duals}). This does not change the graphical function $f_{G_7}^{(1)}=f_{H_7}^{(1)}$, which is clear from \eqref{eq:gf-xspace}. Theorem~\ref{planarthm} gives $f_{H_7}^{(1)}=f_{H^\star_7}^{(1)}$.
The graphical function of $H^\star_7$ can be calculated by the techniques completion and appending of an edge \cite[Sections 3.4 and 3.5]{GraphicalFunctions}. We obtain
\begin{equation*}
	f_{G_7}^{(1)} (z)
	= 20\zeta(5) \frac{4{\mathrm i}D(z)}{z-\zz},
\end{equation*}
where $\zeta(s)=\sum_{k=1}^\infty k^{-s}$ is the Riemann zeta function.
\end{ex}

\begin{ex}
One obtains a self dual graph $H_4=H_4^{\star}$ with $M_{H_4}=2$ if one adds an edge from 0 to 1 to $G_4$. In this case planar duality leads to a trivial statement.
\end{ex}

\subsection*{Acknowledgements.}
Parts of this article were written while Erik Panzer and Oliver Schnetz were visiting scientists at Humboldt University, Berlin.

\section{proof of Theorem~\ref{dualparam}}
Our proof follows the Schwinger trick (see e.g.\ \cite{ItzyksonZuber}). 
From the definition of the gamma function we obtain for $n+\lambda\nu>0$ the convergent integral (note $Q_e>0$)
\begin{equation}\label{trick}
	\frac{1}{Q_e^{\lambda\nu_e}}
	=\frac{1}{\Gamma(n_e+\lambda\nu_e)}\int_0^\infty\alpha_e^{n_e+\lambda\nu_e-1}(-\partial_{\alpha_e})^n\exp(-\alpha_e Q_e) \ \dd\alpha_e.
\end{equation}
We use this formula to replace the product of propagators in \eqref{eq:gf-xspace} by an integral over the edge parameters $\alpha_e$. Since the integrand $\prod_e \big[ \alpha_e^{n_e + \lambda \nu_e - 1} Q_e^{n_e} \exp(-\alpha_e Q_e) \big]$ is positive, the integral is absolutely convergent and we may interchange the order of integration by Fubini's theorem.
In fact, we can also interchange{\footnotemark} the integration over the vertex variables with the partial derivatives $\partial_{\alpha_e}$ to obtain
\begin{equation}\label{f1}
	f_G^{(\lambda)}(x)
	=\frac{1}{\prod_e\Gamma(n_e+\lambda\nu_e)}
	\int_0^\infty\!\!\!\cdots\int_0^\infty
	\Big[\prod_e\alpha^{n_e+\lambda\nu_e-1}(-\partial_{\alpha_e})^{n_e}\Big]
	\sI(\alpha)\prod_e\dd\alpha_e,
\end{equation}
\footnotetext{%
	We can invoke Theorem~\ref{holomorphic} because $\sI(\alpha')$ is finite for $\alpha_e'>0$ (as we will show) and majorizes $\sI(\alpha)$ (on the integrand level) for all $\alpha$ such that $\alpha_e>\alpha_e'$ for all edges $e$. Note that under the assumptions of Theorem~\ref{holomorphic}, $\partial_{\alpha_e} \sI(\alpha)$ coincides with differentiation under the integral sign (see \cite{Mattner} and \cite[Satz~5.8]{Elstrodt}.}%
where $\sI(\alpha)$ is the Gau{\ss}ian integral
\begin{equation*}
	\sI(\alpha)
	=\left(\prod_{v\,{\rm internal}}\int_{\RR^d}\frac{\dd^dx_v}{\pi^{d/2}}\right)\exp\left(-\sum_e \alpha_eQ_e\right).
\end{equation*}
It factorizes into $d$ parts $\sI_k$, one for each coordinate $k$, since the quadratic form \eqref{eq:euclidean-norm} is diagonal.
We arrange the $i$th coordinates of the $V_G$ vertex variables to the vector $(x_{\Int},x_{\Ext})^t$ where $x_{\Int}=(x_v^k)_{v\in\sV^\Int}$ and $x_{\Ext}=(x_v^k)_{v\in\sV^{\Ext}}$.
Then, the quadratic form in the exponential of $\sI_k$ takes the form
\begin{equation*}
	\sum_e\alpha_eQ_e^k
	=x^t_{\Int} L^{\mathrm{ii}}(\alpha) x_{\Int}
	+x^t_{\Int} L^{\mathrm{ie}}(\alpha) x_{\Ext}
	+x_{\Ext}^t L^{\mathrm{ei}}(\alpha) x_{\Int}
	+x_{\Ext}^t L^{\mathrm{ee}}(\alpha) x_{\Ext}
\end{equation*}
in terms of the (symmetric) Laplace matrix \cite{BognerWeinzierl}
\begin{equation}\label{Ldef}
	L
	= \begin{pmatrix}
		L^{\mathrm{ii}} & L^{\mathrm{ie}} \\
		L^{\mathrm{ei}} & L^{\mathrm{ee}} \\
	\end{pmatrix}
	\quad\text{with entries}\quad
	L(\alpha)_{uv}
	=\begin{cases}
		\sum\limits_{e{\rm\,incident\,to\,}v}\alpha_e & \text{if $u=v$ and}\\
		-\sum\limits_{e=\{u,v\}}\alpha_e & \text{otherwise.}
	\end{cases}
\end{equation}
By convergence, $L^{\mathrm{ii}}$ is positive definite.
We complete the quadratic form to a perfect square, shift the integration variable to $x_{\Int}+L^{\mathrm{ii}-1}L^{\mathrm{ie}}x_{\Ext}$ and obtain by a standard calculation
\begin{equation*}
	\sI_k
	= 
	\det(L^{\mathrm{ii}})^{-1/2}
	\exp\Big(
		x_{\Ext}^t
		[L^{\mathrm{ei}}L^{\mathrm{ii}-1}L^{\mathrm{ie}}-L^{\mathrm{ee}}]
		x_{\Ext}
	\Big).
\end{equation*}
The summation over $k$ in the exponent therefore leads us to
\begin{equation}\label{temp}
	\sI(\alpha)
	= \prod_{k=1}^d \sI_k(\alpha)
	= \det(L^{\mathrm{ii}})^{-d/2}
	\exp\Big(
		\sum_{k,\ell=1}^{V^\Ext}
		(x_k^t x_\ell^{})
		[L^{\mathrm{ei}}L^{\mathrm{ii}-1}L^{\mathrm{ie}}-L^{\mathrm{ee}}]_{k,\ell}
	\Big).
\end{equation}
An application of the matrix tree theorems \cite{Chaiken,BognerWeinzierl} shows that\footnote{%
	In the notation of the (All minors) matrix tree theorem \cite[equation~(2)]{Chaiken}, the first equality is precisely the case $W=U=V^\Ext$, $S=V$. The second identity follows from Cramer's rule by setting $W=V^\Ext \cup \set{v}$ and $U=V^\Ext \cup \set{w}$ and noting that $\varepsilon(W,S)\varepsilon(U,S) = (-1)^{v+w}$ by the remarks after \cite[equation~(3)]{Chaiken}.
}
\begin{equation*}
	\det(L^{\mathrm{ii}})=\tilde{\Psi}
	\quad\text{and}\quad
	\left(L^{\mathrm{ii}-1}\right)_{v,w}
	=\frac{1}{\tilde{\Psi}}\tilde{\Psi}_G^{vw,1,\ldots,V^{\Ext}}
\end{equation*}
for all internal $v$ and $w$.
We can therefore interpret the matrix elements
\begin{equation}\label{term1}
	\tilde{\Psi}(L^{\mathrm{ei}}L^{\mathrm{ii}-1}L^{\mathrm{ie}})_{k,\ell}
	=
	\sum_{\substack{e=\set{k,v}\\f=\set{\ell,w}}}
		\alpha_e\alpha_f
		\tilde{\Psi}_G^{vw,1,\ldots,V^{\Ext}}
	\quad\text{($v, w$ internal)}
\end{equation}
in the exponential of \eqref{temp} in terms of subgraphs of $G$. We distinguish two cases:
\begin{figure}%
	$k\neq \ell$:\ \ \raisebox{-0.5\height}{\includegraphics[height=3.5cm]{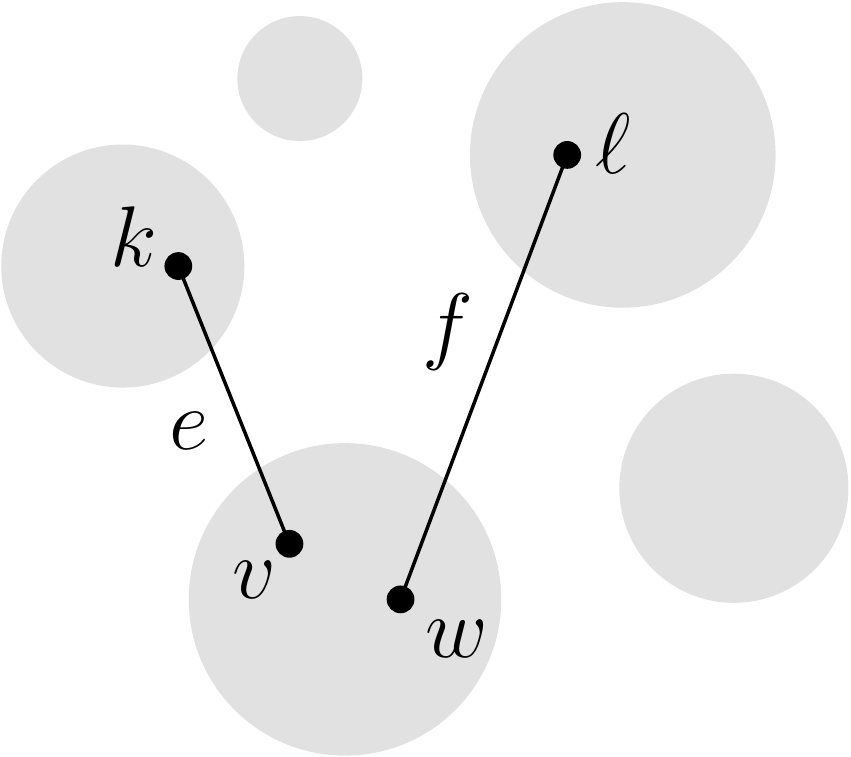}}
	\hspace{1cm}
	$k=\ell$:\ \ \raisebox{-0.5\height}{\includegraphics[height=3.5cm]{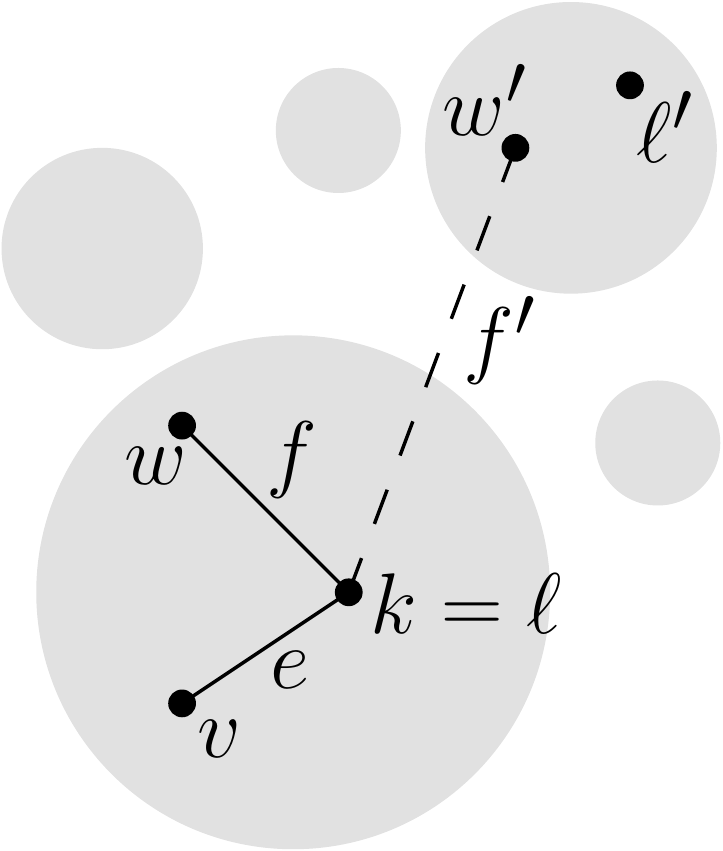}}%
	\caption{%
		For $k\neq \ell$, the grey areas indicate the connected components of $F$. Adding $e$ and $f$ connects $k$ with $\ell$.
		In the case $k=\ell$ we depict the connected components of $F'=F\cup\set{e}$; note that $w$ lies in the same component as $k$.
		When we extend the sum to all edges $f$ incident to $k$, additional contributions arise when $w$ lies in a different
		component of $F'$ and thus connects $k$ to another external vertex $\ell'$ (indicated by the dashed edge $f'$).%
}%
	\label{fig:forests} %
\end{figure}
\begin{itemize}
\item[$k\neq \ell$:] 
	Adding the two edges $e$, $f$ to a spanning forest $F \in \sF_G^{vw,1,\ldots,V^\Ext}$ yields a forest $F' = F \cup \set{e,f} \in \sF_G^{k\ell,(m)_{m\neq k,\ell}}$
	(see Figure~\ref{fig:forests}). Conversely, each $F'$ arises exactly once this way, because it determines $e$ and $f$ as the initial and final edges on the unique path in $F'$ connecting $k$ and $\ell$.
	The only exception are forests $F'$ where this path is just a single edge $e=\set{k,\ell}$ connecting them directly. But in this case $F'\setminus e \in \sF_G^{1,\ldots,V^{\Ext}}$, so we conclude
\begin{equation*}
	\sum_{\mathclap{\substack{e=\set{k,v}\\f=\set{\ell,w}}}}
	\alpha_e\alpha_f\tilde{\Psi}_G^{vw,1,\ldots,V^{\Ext}}(\alpha)
	=\tilde{\Psi}_G^{k\ell,(m)_{m\neq k,\ell}}(\alpha)
	-\tilde{\Psi} \sum_{\mathclap{e=\set{k,\ell}}} \alpha_e
	.
\end{equation*}

\item[$k=\ell$:] Adding $e$ to $F \in \sF_G^{vw,1,\ldots,V^\Ext}$ gives a forest $F'=F \cup \set{e} \in \sF_G^{1,\ldots,kw,\ldots,V^{\Ext}}$. Each such $F'$ occurs exactly once, because $e$ is necessarily the (unique) first edge on the path in $F'$ connecting $k$ with $w$, hence
	\begin{equation*}
		(\tilde{\Psi}L^{\mathrm{ei}}L^{\mathrm{ii}-1}L^{\mathrm{ie}})_{k,k}
		= \sum_{f=\set{k,w}} \alpha_f \tilde{\Psi}_G^{1,\ldots,kw,\ldots,V^\Ext}
		.
	\end{equation*}
	For a fixed $F'\in\sF_G^{1,\ldots,kw,\ldots,V^\Ext}$, $f$ runs over all edges that connect $k$ to a vertex $w$ that lies in the same connected component of $F'$. If we sum instead over all edges incident to $k$, we get additional contributions when $w$ lies in another component, say the one containing $\ell'$ (see Figure~\ref{fig:forests}). Therefore,
	\begin{equation*}
		(\tilde{\Psi}L^{\mathrm{ei}}L^{\mathrm{ii}-1}L^{\mathrm{ie}})_{k,k}
		=  \tilde{\Psi}\sum_{k \in f} \alpha_f
		- \sum_{\ell'\neq k} \tilde{\Psi}_G^{k\ell',(m)_{m\neq k,\ell'}}.
	\end{equation*}
\end{itemize}
According to \eqref{Ldef}, the contributions proportional to $\tilde{\Psi}$ cancel in both cases when we subtract $(\tilde{\Psi}L^{\mathrm{ee}})_{k,\ell}$ from \eqref{term1}, such that \eqref{temp} becomes
\begin{equation*}
	\sI
	= \tilde{\Psi}^{-d/2}\exp\Big(
		-\tilde{\Psi}^{-1}
		\sum_{\mathclap{1\leq k<\ell\leq V^\Ext}}
			(x_k^2-2x_k^t x_\ell+x_\ell^2)
			\tilde{\Psi}_G^{k\ell,(m)_{m\neq k,\ell}}
	\Big)
	=
	\tilde{\Psi}^{-d/2}\exp(-\tilde{\Phi}_G/\tilde{\Psi}).
\end{equation*}
Let us now insert a factor $1=\int_0^{\infty} \delta(t-H^{1/r}(\alpha))\dd t$ into \eqref{f1}, where $H(\alpha)$ can be any homogeneous polynomial of degree $r>0$ which is positive inside $\Delta$.
After we substitute all $\alpha_e$ by $t\alpha_e$ and collect the powers of $t$, the integrand of \eqref{f1} becomes
\begin{equation*}
	\delta(1-H^{1/r}(\alpha))
	\Big(\prod_{e}\alpha_e^{n_e+\lambda\nu_e-1}\partial_{\alpha_e}^{n_e}\Big)
	\tilde{\Psi}^{-d/2}
	\Big[
		\int_0^{\infty}
	t^{M_G-1}
	\mathrm{e}^{-t\tilde{\Phi}_G/\tilde{\Psi}}
	\dd t
	\Big]
	\prod_e\dd\alpha_e,
\end{equation*}
because $\tilde{\Psi}$ and $\tilde{\Phi}_G$ are homogeneous in $\alpha$ of degree $V^{\Int}$ and $V^{\Int}+1$, respectively.
We integrate over $t$ using \eqref{trick}. The choice $H(\alpha)=\alpha_e$ for some edge $e$ gives a particularly simple representation as an affine integral over $\RR_+^{E_G-1}$ which is equivalent to \eqref{fdualparam}.

\section{proof of Theorem~\ref{generalthm}}
In this section we prove the real analyticity of graphical functions. Because the polynomial $\tilde{\Phi}_G$ from \eqref{eq:dual-phi} depends on the squared distances
\begin{equation*}
	s_{i,j} = \norm{x_i-x_j}^2
\end{equation*}
between the external vertices, we may use the dual parametric representation \eqref{fdualparam} to define $f_G^{(\lambda)}(s)$ as a function of the vector $s=(s_{i,j})_{1\leq i<j\leq V^\Ext}$. In the (simply connected) domain where all components of $s$ have positive real parts, the integral \eqref{fdualparam} remains absolutely convergent and hence $f_G^{(\lambda)}(s)$ an analytic function of $s$:
\begin{thm}\label{analytic}
	Let $G$ be a graph with a convergent graphical function \eqref{eq:gf-xspace}.
Then $f_G^{(\lambda)}(x)$ extends to a single-valued, analytic function 
\begin{equation*}
	f_G^{(\lambda)}(s) \colon
	\big\{
		s\in \CC^{V^\Ext(V^\Ext-1)/2}\colon
		\text{$\RE s_{i,j}>0$ for all $1\leq i<j\leq V^{\Ext}$}
	\big\}
	\longrightarrow \CC.
\end{equation*}
\end{thm}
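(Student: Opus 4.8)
The plan is to read off the analytic continuation directly from the parametric representation, most conveniently from \eqref{f1} (before passing to the projective form \eqref{fdualparam}). First I would observe, using \eqref{eq:dual-phi}, that the integrand of \eqref{f1} depends on the external vectors $x$ \emph{only} through the squared distances $s_{i,j}=\norm{x_i-x_j}^2$: indeed $\tilde\Phi_G(\alpha,x)=\sum_{i<j}s_{i,j}\,\tilde\Psi_G^{ij,(k)_{k\neq i,j}}(\alpha)$, while $\tilde\Psi_G$ does not involve $x$ at all. Hence the right-hand side of \eqref{f1} makes sense verbatim for an arbitrary complex vector $s=(s_{i,j})_{1\le i<j\le V^\Ext}$ once one replaces $\tilde\Phi_G(\alpha,x)$ by $\tilde\Phi_G(\alpha,s)\defas\sum_{i<j}s_{i,j}\tilde\Psi_G^{ij,(k)}(\alpha)$, and I would \emph{define} $f_G^{(\lambda)}(s)$ by this formula on $D\defas\{s\colon\RE s_{i,j}>0\text{ for all }i<j\}$. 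Two points then dispose of everything except analyticity: $D$ is convex, hence simply connected, so $f_G^{(\lambda)}(s)$ is automatically a single-valued function; and $D$ contains the squared-distance vector of every configuration of pairwise distinct real points $x_i\in\RR^d$, at which \eqref{f1} returns $f_G^{(\lambda)}(x)$ by the derivation in the proof of Theorem~\ref{dualparam}. So it remains to prove that $f_G^{(\lambda)}$ is analytic on $D$.

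Next I would note that for each fixed $\alpha$ in the positive orthant the integrand of \eqref{f1} is holomorphic --- in fact entire --- in $s$: it is the differential operator $\prod_e\alpha_e^{n_e+\lambda\nu_e-1}(-\partial_{\alpha_e})^{n_e}$ applied to the Gau{\ss}ian $\sI(\alpha,s)=\tilde\Psi_G(\alpha)^{-d/2}\exp(-\tilde\Phi_G(\alpha,s)/\tilde\Psi_G(\alpha))$, whose $s$-dependence sits inside the exponential of a polynomial linear in $s$. (Using \eqref{f1} rather than \eqref{fdualparam} avoids any fractional powers and branch cuts.) By the criterion for holomorphy of parameter integrals already invoked in the proof of Theorem~\ref{dualparam} (see the footnote there, and \cite{Mattner}, \cite[Satz~5.8]{Elstrodt}), together with Hartogs' theorem to pass from separate to joint holomorphy, it then suffices to exhibit, for every compact $K\subset D$, an $\alpha$-integrable majorant of the integrand that is uniform in $s\in K$.

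Producing this majorant is the heart of the matter. Set $\delta\defas\min_{s\in K}\min_{i<j}\RE s_{i,j}>0$. Since the dual forest polynomials have nonnegative coefficients, for $\alpha$ in the positive orthant and $s\in K$ we have
\begin{equation*}
	\RE\frac{\tilde\Phi_G(\alpha,s)}{\tilde\Psi_G(\alpha)}
	=\frac{1}{\tilde\Psi_G(\alpha)}\sum_{1\le i<j\le V^\Ext}(\RE s_{i,j})\,\tilde\Psi_G^{ij,(k)_{k\neq i,j}}(\alpha)
	\ \ge\ \delta\,\frac{\tilde\Phi_G(\alpha,\mathbf1)}{\tilde\Psi_G(\alpha)}\ \ge\ 0,
\end{equation*}
where $\mathbf1=(1,\dots,1)$, so $\abs{\exp(-\tilde\Phi_G(\alpha,s)/\tilde\Psi_G(\alpha))}\le\exp(-\delta\,\tilde\Phi_G(\alpha,\mathbf1)/\tilde\Psi_G(\alpha))$. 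Expanding $\prod_e\alpha_e^{n_e+\lambda\nu_e-1}(-\partial_{\alpha_e})^{n_e}\sI(\alpha,s)$ by the Leibniz rule produces a finite signed sum of terms, each a product of: a polynomial in $\alpha$ with nonnegative coefficients; derivatives of $\tilde\Phi_G(\alpha,s)/\tilde\Psi_G(\alpha)$ (each affine-linear in $s$, hence bounded on $K$ in modulus by a constant times the same derivative at $s=\mathbf1$); derivatives of $\tilde\Psi_G^{-d/2}$; and the exponential $\exp(-\tilde\Phi_G(\alpha,s)/\tilde\Psi_G(\alpha))$. Taking absolute values and inserting the exponential bound gives $\bigl|\prod_e\alpha_e^{n_e+\lambda\nu_e-1}(-\partial_{\alpha_e})^{n_e}\sI(\alpha,s)\bigr|\le C_K\,g(\alpha)$ for a single $s$-independent $g\ge0$ assembled from $\tilde\Phi_G(\alpha,\mathbf1)$, $\tilde\Psi_G(\alpha)$ and polynomials. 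The remaining point is that $\int_{\RR_+^{E_G}}g(\alpha)\,\dd\alpha_1\cdots\dd\alpha_{E_G}<\infty$, which follows from exactly the power counting behind Theorem~\ref{dualparam}: near each face $\alpha_e\to0$ the prefactor $\alpha_e^{n_e+\lambda\nu_e-1}$ with $n_e+\lambda\nu_e>0$ is integrable, and at infinity and along the infrared strata the factor $\exp(-\delta\tilde\Phi_G(\alpha,\mathbf1)/\tilde\Psi_G)$ provides the same suppression as $\sI$ itself --- $\tilde\Phi_G(\alpha,\mathbf1)$ differing from $\tilde\Phi_G(\alpha,x)$, for any fixed real configuration $x$, only by $\alpha$-independent positive constants --- while the polynomial factors produced by the derivatives are scaling-neutral and harmless, which is precisely why \eqref{fdualparam} is valid for arbitrary $n_e$ with $n_e+\lambda\nu_e>0$.

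With holomorphy of the integrand and a locally uniform integrable majorant established, the cited parameter-integral criterion gives that $f_G^{(\lambda)}(s)$ is analytic on $D$, which together with the first paragraph completes the proof. The hard part is the majorant construction in the third paragraph: while the exponential estimate and the boundedness of the $s$-dependent factors on $K$ are elementary, one must check with some care that neither the polynomial prefactors generated by the derivatives $(-\partial_{\alpha_e})^{n_e}$ nor the complex deformation of the $s_{i,j}$ disturbs the convergence at the boundary strata of $\RR_+^{E_G}$; fortunately this requires no power counting beyond what already underlies Theorem~\ref{dualparam}.
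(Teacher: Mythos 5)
Your setup and your treatment of the case $n_e=0$ are sound and essentially coincide with the paper's first step: the only $s$-dependence sits in $\tilde\Phi_G(\alpha,s)$, the bound $\RE\tilde\Phi_G(\alpha,s)\ge\delta\,\tilde\Phi_G(\alpha,\mathbf 1)\ge\tilde\Phi_G(\alpha,\hat x)$ for a suitably rescaled real configuration $\hat x$ turns the integrand of the convergent graphical function $f_G^{(\lambda)}(\hat x)$ into a uniform integrable majorant, and Theorem~\ref{holomorphic} then gives holomorphy. The genuine gap is in your third paragraph, exactly where you locate ``the heart of the matter''. After the Leibniz expansion you must integrate the \emph{absolute value of each term separately}, and your justification --- that this ``follows from exactly the power counting behind Theorem~\ref{dualparam}'' and that the polynomial factors are ``scaling-neutral and harmless'' --- does not hold up. The convergence established in the proof of Theorem~\ref{dualparam} concerns only the combined, signed quantity $\prod_e(-\partial_{\alpha_e})^{n_e}\sI(\alpha)$: it comes from the positivity of $\prod_e Q_e^{n_e}\exp(-\alpha_e Q_e)$ and Fubini, so it says nothing about the individual Leibniz terms, whose cancellations are destroyed once absolute values are taken --- and once $s$ is complex the terms cannot recombine into anything positive, so term-by-term integrability is genuinely required. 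Moreover ``scaling-neutral'' refers only to the overall homogeneity degree, whereas convergence is governed by the scaling with respect to \emph{every} subgraph $g$, and the extra powers of $\tilde{\Psi}$ in the denominators and the polynomial numerators produced by the derivatives do shift these subgraph degrees. The exponential in \eqref{f1} does not rescue you either: on strata where $\ldeg{g}(\tilde\Phi_G)\ge\ldeg{g}(\tilde\Psi)$ the ratio $\tilde\Phi_G/\tilde\Psi$ stays bounded and provides no suppression.

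Closing this gap is precisely the second half of the paper's proof, and it is not a routine check. One writes the differentiated integrand as $\sum_m q_m(s)F_m(\alpha)$, identifies each $F_m$ with the dual parametric integrand of a graphical function in dimension $d'=d+4\sum_e n_e$ with weights $\lambda'\nu_e'=\lambda\nu_e+n_e+m_e>0$, and then verifies the infrared and ultraviolet conditions \eqref{infrared} and \eqref{ultraviolet} for every relevant subgraph $g$, using the degree bounds $\ldeg{g}(\tilde\Psi^p_G)\ge V_g^{\Int}$ and $\deg_g(\tilde\Psi^p_G)\le V_g-1$ of Proposition~\ref{prop1} together with the fact that each derivative $\partial_{\alpha_e}$ with $e\in\sE_g$ changes $\ldeg{g}$ and $\deg_g$ by at most one. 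You would need to supply an equivalent subgraph-by-subgraph analysis to make your majorant $g(\alpha)$ integrable. The rest of your proposal --- single-valuedness from simple connectedness of the domain, entirety of the integrand in $s$ for fixed $\alpha$, and the appeal to the parameter-integral criterion --- is fine.
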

In the special case of three external vertices, this implies the real analyticity of $f_G^{(\lambda)}(z)$ on $\CC\setminus\set{0,1}$:
\begin{proof}[Proof of Theorem~\ref{generalthm}]
	Let $z\in\CC\setminus\set{0,1}$. For the three external labels $0$, $1$, $z$ we have $s_{0,1}=1>0$, $s_{0,z}=z\zz>0$ and $s_{1,z}=(z-1)(\zz-1)>0$ according to \eqref{externalvectors}. With theorem~\ref{analytic} we see that $f_G^{(\lambda)}(z,\bar{z})$ is composition of analytic functions, which proves (G3).
	The identity (G1) is immediate from \eqref{fdualparam} as it expresses $f_G^{(\lambda)}(z)$ as a function of $\abs{z}$ and $\abs{1-z}$.
	Finally recall that $f_G^{(\lambda)}(z)$ is defined as the value of the (convergent) integral \eqref{eq:gf-xspace} and thus manifestly single-valued.
\end{proof}
For the proof of Theorem~\ref{analytic} we need the following notation:
\begin{defn}
	Let $g$ be a subgraph of $G$ with edge set $\sE_g\subseteq\sE_G$ and let $Q\in\CC[\alpha_e,e\in\sE_G]$ be a polynomial in the edge variables of $G$.
Then, the (low) degree ($\ldeg{g}(Q)$) $\deg_g(Q)$ of $Q$ is the (low) degree of $Q$ in the edge variables $\alpha_e$, $e\in\sE_g$ of the subgraph $g$.

	In other words, $c=\ldeg{g}(Q)$ is the largest integer such that each monomial in $Q$ has at least $c$ factors $\alpha_e$ with $e \in \sE_g$ (with multiplicity).
Similarly, $C=\deg_g(Q)$ is the smallest integer such that each monomial in $Q$ has at most $C$ factors $\alpha_e$ with $e\in \sE_g$.
\end{defn}
Note that $\ldeg{g}(Q)$ and $\deg_g(Q)$ are defined for polynomials $Q$ in $E_G$ variables. So for $Q=\alpha_1-\alpha_3+\alpha_2$ we have $\ldeg{\set{2}}(Q)=0$, even though on
the subspace $\alpha_1=\alpha_3$ the low degree of $Q$ in $\alpha_2$ is 1.
\begin{prop}\label{prop1}
Let $g$ be a subgraph of a graph $G$ with external vertices. Let $\tilde{\Psi}^p_G(\alpha)$ be a dual spanning forest polynomial \eqref{eq:dual-forpol}
for some partition $p$ of external vertices. Then
\begin{equation}\label{subgraphineq}
	\ldeg{g}(\tilde{\Psi}^p_G)\geq V^{\Int}_g,\quad\deg_g(\tilde{\Psi}^p_G)
	\leq V_g-1,
\end{equation}
where $\sV_g$ and $V^{\Int}_g$ are as in \eqref{ultraviolet} and \eqref{infrared}, respectively.
\end{prop}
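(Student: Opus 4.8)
The plan is to reduce \eqref{subgraphineq} to a statement about each individual spanning forest, and then prove that statement by an elementary counting argument. Recalling $\tilde{\Psi}_G^p(\alpha)=\sum_{F\in\sF_G^p}\prod_{e\in F}\alpha_e$, every monomial of $\tilde{\Psi}_G^p$ has the form $\prod_{e\in F}\alpha_e$ for some $F\in\sF_G^p$, and it contains exactly $\abs{F\cap\sE_g}$ variables $\alpha_e$ with $e\in\sE_g$. By the definitions of $\ldeg{g}$ and $\deg_g$, the assertion \eqref{subgraphineq} is therefore equivalent to
\[
	V^{\Int}_g\le \abs{F\cap\sE_g}\le V_g-1 \qquad\text{for every } F\in\sF_G^p
\]
(if $\sF_G^p=\emptyset$, then $\tilde{\Psi}_G^p=0$ and there is nothing to prove).

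The upper bound is the straightforward half: the edges of $F$ that lie in $\sE_g$ have all their endpoints among $\sV_g$ and, being a subset of the forest $F$, form an acyclic graph; a forest on the vertex set $\sV_g$ has at most $V_g-1$ edges, so $\abs{F\cap\sE_g}\le V_g-1$ and hence $\deg_g(\tilde{\Psi}_G^p)\le V_g-1$.

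For the lower bound I would argue as follows. Write $F=T_1\cup\dots\cup T_n$ with $p_i\subseteq T_i$, and root each tree $T_i$ at a vertex $r_i\in p_i$; this is possible because the parts $p_i$ are non-empty, and note that $r_i$ is then an external vertex. A $g$-internal vertex $v$ is internal, hence $v\notin\sV^\Ext$, so $v\ne r_i$ for the unique $i$ with $v\in T_i$; let $e_v\in F$ be the edge joining $v$ to its parent in the rooted tree $T_i$. Since $v$ is $g$-internal, every edge of $G$ incident to $v$ — in particular $e_v$ — belongs to $\sE_g$. The assignment $v\mapsto e_v$ is injective, because within each $T_i$ it is the standard bijection between non-root vertices and edges, and edges of distinct components are distinct. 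Thus $F\cap\sE_g$ contains at least $V^{\Int}_g$ edges, which gives $\ldeg{g}(\tilde{\Psi}_G^p)\ge V^{\Int}_g$.

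The only point requiring care is the non-emptiness of the parts $p_i$: this is precisely what is used in the rooting step, and it rules out the degenerate situation of a component of $F$ consisting of a single $g$-internal vertex (such a component would contribute to $V^{\Int}_g$ without contributing any edge to $F\cap\sE_g$, breaking the lower bound). One should also keep in mind that for a non-induced subgraph $g$ the set $\sE_g$ need not contain all edges of $G$ between vertices of $\sV_g$, but neither half of the argument is affected by this. Everything else is routine bookkeeping.
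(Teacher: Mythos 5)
Your proof is correct and follows essentially the same route as the paper: the upper bound via $F\cap\sE_g$ being a forest on $\sV_g$, and the lower bound by rooting each tree at an external vertex and injectively assigning to each $g$-internal vertex the edge towards its parent (the paper phrases this as orienting edges towards a chosen external vertex and counting outgoing edges). Your extra remarks on the non-emptiness of the parts $p_i$ and on non-induced subgraphs are accurate but not needed beyond what you say.
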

\begin{proof}
Let $F\in\sF^p_G$ be a spanning forest of $G$. In every tree $T$ of $F$ we choose an external vertex $v_T\in T$ and we orient all edges of $T$ such that they point towards $v_T$.
Because $F$ is spanning, every $g$-internal vertex $u$ has one outgoing edge in $F$. Conversely, every edge in $F$ has a unique vertex $u$ as source, therefore
\begin{equation*}
	\ldeg{g}(\tilde{\Psi}^p_G)
	=\min_{F\in\sF^p_G} E_{g\cap F}
	\geq V^\Int_g.
\end{equation*}
Finally we use that $g\cap F$ is a forest in $g$ and thus has at most $V_g-1$ edges, hence
\begin{equation*}
	\deg_g(\tilde{\Psi}^p_G)
	=\max_{F\in\sF^p_G} E_{g\cap F}
	=V_g-1
	.\qedhere
\end{equation*}
\end{proof}
\begin{proof}[Proof of Theorem~\ref{analytic}]
We first derive Theorem~\ref{analytic} from \eqref{fdualparam} in the case that all $n_e=0$.
We consider the integrand as a function of the vector $s=(s_{i,j})_{i,j \in \sV^\Ext,i< j}$ which we restrict to the complex domain ($\varepsilon>0$ may be chosen arbitrarily small)
	\begin{equation*}
		\Omega^{\varepsilon}
		= \left\{
			s \colon 
			\RE s_{i,j} > \varepsilon
			\quad\text{for all $1\leq i<j\leq V^{\Ext}$}
		\right\}
		\subset \CC^{V^{\Ext}(V^{\Ext}-1)/2}.
	\end{equation*}
	Let $\hat{s}_{i,j} = \norm{\hat{x}_i-\hat{x}_j}^2$ denote the distances of an arbitrary set $\hat{x}\in \RR^{d V^\Ext}$ of pairwise distinct points. We can rescale $\hat{x}$ to ensure $\max_{i<j} (\hat{s}_{i,j}) = \varepsilon$, such that
\begin{equation*}
	\abs{\tilde{\Phi}_G(\alpha,s)}
	\geq\RE \tilde{\Phi}_G(\alpha,s)
	> \tilde{\Phi}_G(\alpha,\hat{x})
\end{equation*}
for every $s \in \Omega^{\varepsilon}$ and all $\alpha \in \RR_+^E$. As $f^{(\lambda)}_G(\hat{x})$ is convergent, its parametric integrand provides an integrable majorant
$F(\alpha,\hat{s}) \geq F(\alpha,s)$
to the integrand $F(\alpha,s)$ of $f_G^{(\lambda)}(s)$, uniformly for all $s \in \Omega^{\varepsilon}$.
This implies the analyticity of $f_G^{(\lambda)}(s)$ in $\Omega^{\varepsilon}$, for every $\varepsilon>0$ (we cite this result below as theorem~\ref{holomorphic}).

Now let us remove the restriction that $n_e=0$. We compute the derivatives in \eqref{fdualparam} and write the resulting integrand as
\begin{equation}\label{eq:integrand-differentiated}
	F(\alpha,s) =
	\left[ \prod_e \alpha_e^{n_e+\lambda \nu_e-1} \right]
	\frac{\sum_{m} \alpha^m q_m(s)}{\tilde{\Phi}_G(\alpha,s)^{M_G+\sum_e n_e}\tilde{\Psi}(\alpha)^{d/2-M_G+\sum_e n_e}},
\end{equation}
where we expanded the numerator polynomial into its monomials $\alpha^m = \prod_e \alpha_e^{m_e}$ in Schwinger parameters and their coefficients $q_m \in \QQ[s_{i,j}]$.
Note that the operators $\alpha_e \partial_{\alpha_e}$ do not change the $\alpha$-degree, so $F$ stays homogeneous of degree $-E_G$ in the $\alpha$ variables, no matter which values are chosen for the $n_e$. This gives
\begin{equation*}
	\sum_e m_e
	=
	\left(
		\deg_G(\tilde{\Phi}_G)+\deg_G(\tilde{\Psi}_G) - 1
	\right)\sum_e n_e
	=2V^{\Int}\sum_e n_e
	,
\end{equation*}
because the polynomials $\tilde{\Phi}_G$ and $\tilde{\Psi}_G$ have the $\alpha$-degrees $V^{\Int}+1$ and $V^{\Int}$.
If we write \eqref{eq:integrand-differentiated} as $F(\alpha,s) = \sum_m q_m(s) F_m(\alpha,s)$ we can thus identify each $F_m$ with the (dual parametric) integrand for $f^{(\lambda')}_G(s)$ in 
$d'=2\lambda'+2=d+4\sum_e n_e$
dimensions with weights
$\lambda' \nu_e' = \lambda \nu_e + n_e + m_e > 0$. With the first part of the proof it suffices to show that each of these $f^{(\lambda')}_G$ is a convergent graphical function.
We therefore have to consider the infrared \eqref{infrared} and ultraviolet \eqref{ultraviolet} conditions.
Because differentiation $\partial_{\alpha_e}$ for $e\in\sE_g$ can lower the low degree by at most one, we obtain
\begin{equation*}
\sum_{e\in g}m_e-(\ldeg{g}(\tilde{\Phi}_G)+\ldeg{g}(\tilde{\Psi}_G))\sum_{e\in G}n_e\geq-\sum_{e\in g}n_e.
\end{equation*}
From the convergence of $f^{(\lambda)}_G$ and from Proposition~\ref{prop1} we obtain
\begin{equation*}
	\sum_{e\in g}\lambda'\nu_e'
	=\sum_{e\in g}(\lambda\nu_e+n_e+m_e)
	>\tfrac{d}{2}V_g^{\Int}+2V_g^{\Int}\sum_{e\in G}n_e
	=\tfrac{d'}{2} V_g^{\Int},
\end{equation*}
proving infrared convergence. Likewise, differentiation $\partial_{\alpha_e}$ for $e\in\sE_g$ lowers the degree by at least one, yielding
\begin{equation*}
	\sum_{e\in g}m_e-(\deg_g(\tilde{\Phi}_G)+\deg_g(\tilde{\Psi}_G))\sum_{e\in G}n_e\leq-\sum_{e\in g}n_e.
\end{equation*}
Together with Proposition~\ref{prop1} this proves ultraviolet convergence (and thus completes our proof of Theorem~\ref{analytic}):
\begin{equation*}
	\sum_{e\in g}\lambda'\nu_e'
	=\sum_{e\in g}(\lambda\nu_e+n_e+m_e)
	<\left(\tfrac{d}{2}+2\sum_{e\in G}n_e \right)(V_g-1)
	=\tfrac{d'}{2}(V_g-1)
	.\qedhere
\end{equation*}
\end{proof}
For convenience of the reader we cite here the result from calculus in the form \cite[Theorem~2.12]{Sauvigny}, which is perfectly adapted to our application:
\begin{thm}\label{holomorphic}
Let $\Theta\subset\RR^m$ and $\Omega\subset\CC^n$ denote domains in the respective spaces of dimensions $m,n\in\NN$. Furthermore, let
\begin{equation*}
	f(t,z)
	=f(t_1,\ldots,t_m,z_1,\ldots,z_n)\colon
	\Theta\times\Omega\longrightarrow\CC
\end{equation*}
represent a continuous function with the following properties:
\begin{itemize}
	\item For each fixed $t\in\Theta$, the function
$ z \mapsto f(t,z) $
is holomorphic in $z\in\Omega$.

	\item We have a continuous function 
$
	F(t)\colon\Theta\longrightarrow[0,\infty)
$ which is integrable,
\begin{equation*}
	\int_\Theta F(t)\ \dd t< \infty,
\end{equation*}
and uniformly majorizes $f$:
$ \abs{f(t,z)} \leq F(t)$ for all $(t,z)\in\Theta\times\Omega$.
\end{itemize}
Then the function
$
	z \mapsto \int_\Theta f(t,z)\ \dd t
$
is holomorphic in $\Omega$.
\end{thm}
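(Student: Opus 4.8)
The plan is to reduce the statement to the classical one-variable theory by combining Morera's theorem with Osgood's lemma, with the uniform majorant $F$ doing all the work at the two places where one must interchange a limit (or an integral) with the $t$-integration.

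First I would record that $g(z)\defas\int_\Theta f(t,z)\,\dd t$ is well defined on $\Omega$, since $\abs{f(t,z)}\leq F(t)$ and $F$ is integrable, and that $g$ is continuous: for $z_k\to z$ in $\Omega$ one has $f(t,z_k)\to f(t,z)$ for each fixed $t$ by continuity of $f$, the convergence is dominated by the fixed integrable function $F$, and hence $g(z_k)\to g(z)$ by dominated convergence.

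Next I would reduce to $n=1$ via Osgood's lemma: a continuous function on a domain in $\CC^n$ that is holomorphic in each variable separately is holomorphic. Since $g$ is continuous, it suffices to freeze all coordinates of $z$ except one, say $z_j$, and prove holomorphy in that single variable on the corresponding (open) subset of $\CC$. For this one-variable claim I would invoke Morera's theorem: for any solid triangle $T$ with $\overline{T}\subset\Omega$, Goursat's theorem gives $\oint_{\partial T} f(t,z)\,\dd z=0$ for every $t\in\Theta$, and since $\partial T$ is compact the bound $\abs{f(t,z)}\leq F(t)$ makes $(t,z)\mapsto f(t,z)$ integrable on $\Theta\times\partial T$ (with total mass at most $\ell(\partial T)\int_\Theta F(t)\,\dd t<\infty$), so Fubini yields $\oint_{\partial T} g(z)\,\dd z=\int_\Theta\oint_{\partial T} f(t,z)\,\dd z\,\dd t=0$; continuity of $g$ together with Morera's theorem then gives holomorphy.

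I do not expect a genuine obstacle — this is a standard result — but the one point deserving care is the legitimacy of the two interchanges (dominated convergence for continuity, Fubini for the contour integral), and in both cases the hypothesis provides exactly the integrable majorant needed. An alternative route that sidesteps Morera and Osgood is to integrate the Cauchy integral formula for $f(t,\cdot)$ over a small polydisc $\overline{D}\subset\Omega$ against $t$ — again legitimate by the majorant, since the Cauchy kernel stays bounded on the distinguished boundary — obtaining $g(z)=(2\pi\mathrm i)^{-n}\oint_{\mathbb T} g(\zeta)\prod_j(\zeta_j-z_j)^{-1}\,\dd\zeta$ with $\mathbb T$ the torus $\partial D_1\times\cdots\times\partial D_n$; expanding the kernel into a power series in $z$ that converges uniformly on compact subsets of $D$ then exhibits $g$ as holomorphic on $D$, and hence on all of $\Omega$.
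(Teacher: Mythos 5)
Your argument is correct, but note that there is no in-paper proof to compare it against: the paper deliberately quotes this statement from the literature (\cite[Theorem~2.12]{Sauvigny}) and uses it as a black box in the proof of Theorem~\ref{analytic}. Your route --- dominated convergence for the continuity of $g(z)=\int_\Theta f(t,z)\,\dd t$, Osgood's lemma to reduce to one complex variable, and Goursat--Fubini--Morera in one variable --- is the standard self-contained argument, and you correctly identify the only two places where the hypotheses are consumed: the integrable majorant $F$ dominates $f(t,z_k)\to f(t,z)$ pointwise in $t$, and $\abs{f}\leq F$ gives the finiteness of the double integral over $\Theta\times\partial T$ that legitimizes Fubini. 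Two cosmetic points only: when you freeze all coordinates except $z_j$, the closed triangle should be required to lie in the open slice $\set{w\in\CC\colon (z_1,\ldots,w,\ldots,z_n)\in\Omega}$ rather than ``in $\Omega$''; and in the polydisc alternative the interchange that produces the Cauchy formula for $g$ needs (and has, as you note) joint integrability of $f(t,\zeta)$ times the kernel on $\Theta\times\mathbb{T}$, which follows from the boundedness of the kernel on the distinguished boundary for $z$ ranging over a compact subset of the polydisc. Neither point affects correctness; either of your two routes would serve as a complete proof of the cited theorem.
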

\begin{remark}
We may consider a graphical function $f_G^{(\lambda)}(z)$ as a function of two complex variables $z$ and $\zz$ and analytically continue away from the locus where $\zz$
is the complex conjugate of $z$. In this case Theorem~\ref{analytic} states that $f_G^{(\lambda)}$ is analytic in $z$ and $\zz$ if $\RE z\zz>0$ and $\RE (1-z)(1-\zz)>0$.
If one continues analytically beyond this domain, additional singularities will in general appear.
Already in example~\ref{ex:g4} we encounter $z=\bar{z}$, which corresponds to the vanishing of the K\"{a}ll\'{e}n function
	\begin{equation*}
		(z-\zz)^2
		=
		s_{0,z}^2+s_{1,z}^2+s_{0,1}^2-2 s_{0,z} s_{1,z} - 2 s_{0,z} s_{0,1} - 2 s_{1,z} s_{0,1}.
	\end{equation*}
For bigger graphs the singularity structure outside $\RE z\zz>0$, $\RE (1-z)(1-\zz) >0 $ becomes more and more complicated (see \cite[table~1]{Panzer:ManyScales} for a few examples).
\end{remark}

\section{proof of Theorem~\ref{planarthm}}
Planar duality for graphical functions is specific to three external labels for which we use $0$, $1$, $z$. Let us first recall the notion of planarity and planar duality for Feynman graphs in this case.\footnote{%
In the physics literature, this definition is standard \cite{MinamiNakanishi}. We did not find an established name for this in the literature on graph theory, except for the term ``circular planar graph'' used in \cite{CurtisIngermanMorrow}.}

\begin{defn}\label{dualdef}
Let $G$ be a graph with three external labels $0$, $1$, $z$. Let $G_v$ be the graph that we obtain from $G$ by adding an extra vertex $v$ which is connected to the external vertices
of $G$ by edges $\{0,v\}$, $\{1,v\}$, $\{z,v\}$, respectively. We say that $G$ is \emph{externally planar} if and only if $G_v$ is planar.

Let $G_v$ be planar and $G_v^\star$ a planar dual of $G_v$. The edges $e^\star$ of $G_v^\star$ are in one to one correspondence with the edges $e$ of $G_v$.
A planar dual of $G$ is given by $G_v^\star$ minus the triangle $\{0,v\}^\star$, $\{1,v\}^\star$, $\{z,v\}^\star$ with external labels $0$, $1$, $z$ corresponding to the
faces $1zv$, $0zv$, $01v$, respectively. The edge weights of $G^\star$ are related to the edge weights of $G$ by \eqref{weightsstar}: $\lambda\nu_e+\lambda\nu_{e^\star}=d/2$.
\end{defn}
We can draw an externally planar graph $G$ with the external labels 0, 1, $z$ in the outer face. A dual $G^\star$ then has also the labels in the outer face, `opposite' to the labels of $G$, see Figure~\ref{fig:duals}.

Another way to construct this dual is by adding three edges $e_{01}=\{0,1\}$, $e_{0z}=\set{0,z}$, $e_{1z}=\set{1,z}$ to $G$ to obtain a graph $G_e$. Its dual $G_e^\star$ differs from $G_v^\star$ upon replacing the triangle $\set{0,v}^\star$, $\set{1,v}^\star$, $\set{z,v}^\star$ by a star $e_{01}^\star$, $e_{0z}^\star$, $e_{1z}^\star$.
From $G_e^\star$ we obtain $G^\star$ by removing this star and labeling its tips with $z$, $1$, $0$, respectively.
Clearly both constructions (starting from the same planar embedding of $G$) lead to the same dual $G^\star$ and prove
\begin{lem}
	Let $G$ be externally planar with dual $G^\star$. Then $G^\star$ is externally planar and $G$ is a dual of $G^\star$.
\end{lem}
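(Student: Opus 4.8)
The plan is to prove the lemma by exhibiting symmetric constructions of the dual that make the statement manifest. Recall that we have two recipes for $G^\star$ starting from a fixed planar embedding of $G$: the vertex construction (form $G_v$ by adding a vertex $v$ joined to $0,1,z$, take a planar dual $G_v^\star$, and delete the triangle on $\{0,v\}^\star,\{1,v\}^\star,\{z,v\}^\star$) and the edge construction (form $G_e$ by adding the triangle $e_{01},e_{0z},e_{1z}$ on the external vertices, take $G_e^\star$, and delete the star $e_{01}^\star,e_{0z}^\star,e_{1z}^\star$). The key observation is that $G_v$ and $G_e^\star$ are planar duals of each other: both arise from the same planar embedding, and under planar duality the vertex $v$ of $G_v$ (a triangle's worth of edges meeting at a point) corresponds exactly to the triangular face in $G_e$ that the three new edges $e_{01},e_{0z},e_{1z}$ bound, whose dual is the star in $G_e^\star$. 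More precisely, $(G_v)^\star = G_e^\star$ as abstract graphs once we identify the extra structure correctly.

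First I would fix a planar embedding of $G$ with $0,1,z$ on the outer face and record carefully, for both constructions, which faces of the intermediate graph carry the external labels $0,1,z$ of $G^\star$ — namely the faces $1zv$, $0zv$, $01v$ in the vertex picture, and the tips of the star in the edge picture. Second, I would verify that the planar dual operation is involutive on embedded graphs, $(H^\star)^\star = H$, so that $G$ is recovered as a dual of something built from $G^\star$. Third, and this is the heart of the argument, I would run the edge construction on $G^\star$: add a triangle on the external vertices $0,1,z$ of $G^\star$ to get $(G^\star)_e$, take its planar dual, and delete the resulting star. Using the identification of the two constructions above applied to $G$, one sees that $(G^\star)_e$ is (a planar dual of) $G_v$, hence its dual is $G_v$ itself up to the triangle/star bookkeeping, and deleting the star recovers $G$ with the correct external labels. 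The edge-weight relation is automatically symmetric: \eqref{weightsstar} reads $\lambda\nu_e + \lambda\nu_{e^\star} = d/2$, which is unchanged under swapping $e \leftrightarrow e^\star$, so the weights assigned to $(G^\star)^\star = G$ by Definition~\ref{dualdef} coincide with the original $\nu_e$.

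The main obstacle I expect is purely combinatorial-topological bookkeeping rather than anything deep: one must track the correspondence of faces and the labels attached to the "opposite" external vertices through two successive dualizations, and confirm that the triangle-deletion in one picture matches the star-deletion in the dual picture at the level of embedded graphs. In particular, external planarity of $G^\star$ amounts to checking that $(G^\star)_v$ is planar, which follows because $(G^\star)_v$ is, up to the triangle/star exchange, a planar dual of $G_e$, and $G_e$ is planar since $G$ was embedded with $0,1,z$ on a common face (so the triangle can be drawn in that face). I would make this precise by arguing on the level of the embedding on the sphere $S^2$, where "outer face" plays no distinguished role and the symmetry between $G$ and $G^\star$ is most transparent; on the sphere, adding the vertex $v$ in the outer face of $G$ and adding the triangle of edges in that same face are genuinely dual moves, and the lemma reduces to the self-duality of this pair of operations together with the involutivity of $H \mapsto H^\star$.
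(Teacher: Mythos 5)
Your proposal is correct and follows essentially the same route as the paper: the paper's proof consists precisely of introducing the alternative ``edge'' construction via $G_e$ and observing that the triangle-at-$v$ and the triangle-on-$\{0,1,z\}$ are exchanged under planar duality, so that $(G^\star)_v$ is identified with $G_e^\star$ and the involutivity of duality on embedded graphs yields the claim. You spell out the face/label bookkeeping and the symmetry of \eqref{weightsstar} more explicitly than the paper does, but the underlying argument is the same.
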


\begin{proof}[Proof of Theorem~\ref{planarthm}]
	Because the edge weights are positive we can use $n_e=0$ in \eqref{fdualparam}. From $M_G=d/2$ we obtain (see \eqref{defM} and \eqref{weightsstar})
\begin{equation*}
	M_{G^\star}
	=\sum_e(\tfrac{d}{2}-\lambda\nu_e)-\tfrac{d}{2}V^{\Int}_{G^\star}
	= \tfrac{d}{2}(E_G-V^{\Int}_{G^\star}-V^{\Int}_G-1)
	= \tfrac{d}{2}(E_{G_v}-V_{G_v^\star}-V_{G_v}+3)
\end{equation*}
where $E_G$ is the number of edges of $G$.
%$E_G = E_{G_v}-3$, $V_{G^{\star}}^\Int = V_{G_v^\star} - 3$ and $V^{\Int}_G=V_{G_v}-4$ 
As the vertices of $G_v^\star$ are the faces of the planar embedding of $G_v$, Euler's formula for planar graphs shows $M_{G^\star}=d/2$.

Comparing \eqref{fdualparam} for the graph $G$ with \eqref{fparam} for the graph $G^\star$ leads to \eqref{eq:planar-dual} if we identify $\alpha_e=\alpha_{e^\star}$ for all edges $e$, provided that 
$
	\tilde{\Phi}_G=\Phi_{G^\star}
$.
This amounts to the identity
$\tilde{\Psi}^{ij,k}_G=\Psi^{ij,k}_{G^\star}$
of spanning forest polynomials for all triples $\{i,j,k\}=\{0,1,z\}$ and hence follows from the bijection of 2-forests given by
\begin{equation*}
	\sF^{ij,k}_G \ni F
	\longleftrightarrow 
	F^\star := \{e^\star\colon e\not\in F\}\in\sF^{ij,k}_{G^\star}.
\end{equation*}
Namely, for any given $F\in\sF^{ij,k}_G$ consider the spanning tree $T_i = F \cup \set{\set{i,v},\set{k,v}}$ of $G_v$. As Tutte points out \cite[Theorem~2.64]{Tutte}, its dual
$T_i^\star = \set{e^{\star}\colon e\notin T}\subseteq \sE_{G_v^\star}$
is a spanning tree of $G_v^\star$ and therefore,
$F^{\star} = T_i^\star\setminus\set{j,v}^\star$
is indeed a $2$-forest.
Furthermore, the edge $\set{j,v}^\star$ connects the external vertices $i$ and $k$ of $G^\star$ and thus $F^\star$ cannot connect $i$ with $k$ (otherwise, $T_i^\star = F^\star \cup \set{j,v}^\star$ would contain a cycle). Likewise (interchanging $i$ and $j$) $F^\star$ does not connect $j$ with $k$, hence
$F^{\star} \in \sF^{i,k}_{G^\star} \cap \sF^{j,k}_{G^\star} = \sF^{ij,k}_{G^\star}$. 
Finally, the symmetry $F=(F^\star)^\star$ implies that the map $F\mapsto F^\star$ is injective and onto.
\end{proof}

\bibliographystyle{JHEPsortdoi}
\bibliography{refs}
\end{document}